\def\RSthmtxt{theorem~}\newref{thm}{name = \RSthmtxt}}
\def\RSlemtxt{lemma~}\newref{lem}{name = \RSlemtxt}}
\theoremstyle{plain}
\newtheorem{thm}{\protect\theoremname}[section]
  \theoremstyle{definition}
  \newtheorem{defn}[thm]{\protect\definitionname}
  \theoremstyle{remark}
  \newtheorem{rem}[thm]{\protect\remarkname}
  \theoremstyle{definition}
  \newtheorem{example}[thm]{\protect\examplename}
  \theoremstyle{plain}
  \newtheorem{lem}[thm]{\protect\lemmaname}
  \theoremstyle{plain}
  \newtheorem{prop}[thm]{\protect\propositionname}
\DeclareMathOperator{\db}{\mathbf{d}}
\DeclareMathOperator{\rb}{\mathbf{r}}
\DeclareMathOperator{\OVL}{OVL}
\DeclareMathOperator{\Bord}{Bord}
\def\RSlemtxt{Lemma~}
\def\RSthmtxt{Theorem~}
  \providecommand{\definitionname}{Definition}
  \providecommand{\examplename}{Example}
  \providecommand{\lemmaname}{Lemma}
  \providecommand{\propositionname}{Proposition}
  \providecommand{\remarkname}{Remark}
\providecommand{\theoremname}{Theorem}
  \providecommand{\definitionname}{Definition}
  \providecommand{\examplename}{Example}
  \providecommand{\lemmaname}{Lemma}
  \providecommand{\propositionname}{Proposition}
  \providecommand{\remarkname}{Remark}
\providecommand{\theoremname}{Theorem}
\begin{document}

\title{Diamond Subgraphs in the Reduction Graph of a One-Rule String Rewriting
System}

\author{Arthur Adinayev \thanks{Software Engineering Department, Shamoon College of Engineering, Israel}
\\
\Envelope \, arthuad@ac.sce.ac.il\\
 \and Itamar Stein\thanks{Mathematics Unit, Shamoon College of Engineering, Israel (Corresponding
Author) }\\
\Envelope \, Steinita@gmail.com\\
}
\maketitle
\begin{abstract}
In this paper, we study a certain case of a subgraph isomorphism problem.
We consider the Hasse diagram of the lattice $M_{k}$ (the unique
lattice with $k+2$ elements and one anti-chain of length $k$) and
find the maximal $k$ for which it is isomorphic to a subgraph of
the reduction graph of a given one-rule string rewriting system. We obtain
a complete characterization for this problem and show that there is
a dichotomy. There are one-rule string rewriting systems for which
the maximal such $k$ is $2$ and there are cases where there is no
maximum. No other intermediate option is possible.
\end{abstract}
\textbf{Mathematics Subject Classification. }68Q42, 68R15

\textbf{Keywords: }one-rule string rewriting systems, reduction graph, subgraph isomorphism problem

\section{Introduction}

The (directed) reduction graph of a string rewriting system (SRS)
$S$ is the graph whose vertices are words, and whose edges are the
one-step reductions. In this paper we study the reduction graph of
one-rule SRSs. Despite their simple appearance, there are many open
problems regarding one-rule SRSs. For instance, the well-known word
problem asks whether two given words are in the same connected component
of the reduction graph and it is a long standing open problem whether
it is decidable for one-rule SRSs (see \cite[Section 2]{Lallement1995}
for a survey). Another example is the termination problem which asks
if there is an infinite path in the reduction graph and it is also
not known if this problem is decidable for one-rule SRSs (see \cite{Geser2003,Shikishima-Tsuji1997}
and \cite[Problem 21b]{Dershowitz2005}). The reduction graph has
a central role in the treatment of each of these problems and many
other questions regarding SRSs or monoids presented by SRSs. Therefore,
any progress in understanding its structure is of value. 

One way to have a better understanding of a graph $G$ is by finding
basic graphs that are or aren't isomorphic to a subgraph of $G$.
We consider subgraphs to be more related to standard concepts of SRSs
rather than other types of embeddings. For instance, an equivalent
formulation of the termination problem is whether the reduction graph
has a subgraph which is a homomorphic image of the infinite path graph.
In this paper, we take a basic finite graph denoted $M_{k}$ (to be
defined shortly) and consider the question of whether $M_{k}$ is
isomorphic to a subgraph of the reduction graph of a given one-rule
SRS $S$. Except from being an interesting question on its own, this
kind of a problem can lead to new properties and characterizations
of one-rule SRSs that might be of use for other types of questions.
Indeed, some of the properties and notions that appear in this research
(for instance, left\textbackslash{}right cancellativity) have been
used in the study of the word problem and other similar combinatorial
questions (see \cite[Chapter II]{Adjan1966} and \cite{adjan1978}).

The reduction graph of a one-rule SRS $\langle A\mid u\to v\rangle$
is always a graded graph (in the sense that for any two vertices $x,y$
any two paths from $x$ to $y$ has the same length), so clearly it
has only graded subgraphs. The graph we consider in this paper is
the Hasse diagram of the lattice $M_{k}$, where $M_{k}$ is the lattice
with $k+2$ elements $\{x,y,z_{1},\ldots,z_{n}\}$ such that $x\leq z_{i}\leq y$
for $1\leq i\leq n$ and $\{z_{1},\ldots,z_{n}\}$ are pairwise incomparable.
It is clearly a graded graph, and for the sake of simplicity, we denote
it also by $M_{k}$. As already mentioned, given a one-rule SRS $S=\langle A\mid u\to v\rangle$
whose reduction graph is denoted $G_{S}$, our goal is to determine
whether $M_{k}$ is embeddable in $G_{S}$, or in other words, what
is the maximal value of $k$ for which $M_{k}$ is embeddable in $G_{S}$.

The paper is organized as follows. Neglecting few trivial cases ($u=v$
or $|A|=1$) and assuming without loss of generality that $|u|\leq|v|$,
we divide the problem into several cases. In \subsecref{Left-cancellative-SRS}
we prove that if $S$ is left (right) cancellative (i.e., $u$ and
$v$ has different first (respectively, last) letters) then $M_{3}$
is not embeddable in $G_{S}$, hence $k=2$ is maximal. In \subsecref{Unbordered_SRSs}
we generalize this to any system where $v$ is not bordered with $u$,
i.e., $u$ is not a prefix or not a suffix of $v$. In \subsecref{Special_SRSs}
we discuss systems where $u=1$ and prove that if $v\neq b^{n}$ for
every $b\in A$ then $M_{k}$ is embeddable in $G_{S}$ for any natural
$k$. On the other hand, if $v=b^{n}$ for some $b\in A$ then $k=2$
is again the maximum. In \subsecref{Bordered_SRSs} we deal with the
remaining case where $u\neq1$ and $v$ is bordered with $u$. We
use Adyan reduction \cite{adjan1978} to reduce this case to a system
of the form $\langle\tilde{A}\mid1\to\tilde{v}\rangle$ which is the
case solved in \subsecref{Special_SRSs}.

In conclusion, we have obtained a dichotomy between cases where $M_{k}$
is embeddable in $G_{S}$ for every natural $k$ and cases where $k=2$
is the maximal value for which $M_{k}$ is embeddable in $G_{S}$.

\section{Preliminaries}

A \emph{directed} \emph{graph} is a tuple $(V,E,\db,\rb)$ consists
of a set (of vertices) $V$, a set (of edges) $E$ and two functions
$d,r:E\to V$ associating each edge $e\in E$ with a domain vertex
$\db(e)$ and a range vertex $\rb(e)$. A \emph{subgraph} $G^{\prime}=(V^{\prime},E^{\prime},\db^{\prime},\rb^{\prime}$)
of $G$ is a graph such that $V^{\prime}\subseteq V$, $E^{\prime}\subseteq E$
and $\db^{\prime},\rb^{\prime}:E^{\prime}\to V^{\prime}$ are the
corresponding restrictions of $\db$ and $\rb$ (in particular, this
requires that \mbox{$\db(E^{\prime})\subseteq V^{\prime}$} and \mbox{$\rb(E^{\prime})\subseteq V^{\prime}$}).
Let $G_{1}=(V_{1},E_{1},\db_{1},\rb_{1})$ and $G_{2}=(V_{2},E_{2},\db_{2},\rb_{2})$
be two graphs. A \emph{graph homomorphism $f:G_{1}\to G_{2}$ }consists
of two functions $f_{V}:V_{1}\to V_{2}$ and $f_{E}:E_{1}\to E_{2}$
such that 
\[
\db_{2}(f_{E}(e))=f_{V}(\db_{1}(e)),\quad\rb_{2}(f_{E}(e))=f_{V}(\rb_{1}(e))
\]
for every $e\in E_{1}$. We say that $f$ is an embedding (so $G_{1}$
is embedded in $G_{2}$) if $f_{E}$ and $f_{V}$ are injective functions\emph{. }

The set of all words over an alphabet $A$ is denoted by $A^{\ast}$.
We denote the empty word by $1$ and the set of all non-empty words
by $A^{+}$. Let $u,v\in A^{\ast}$ be some words. We say that $u$
is a prefix (suffix) of $v$ if there exists $x\in A^{\ast}$ such
that $v=ux$ (respectively, $v=xu$). Also, $u$ is called a factor
of $v$ if there exist $x,y\in A^{\ast}$ such that $v=xuy$. We say
that $v$ is bordered with $u$ if $u$ is both a prefix and a suffix
of $v$. Recall that the length of a word $u\in A^{\ast}$ is the
number of letters in $u$ and it is denoted $|u|$. We say that the
letter $a\in A$ is \emph{at position $i$ }of $u$ if $u=xay$ for
some $x,y\in A^{\ast}$ and $|x|=i$.

Let $A$ be some set and let $R$ be a relation on $A^{\ast}$. A
tuple $S=\langle A\mid R\rangle$ is called a string rewriting system
(SRS). Elements of $R$ are usually written in the form $u_{i}\to v_{i}$
instead of $(u_{i},v_{i})$. Let $S=\langle A\mid R\rangle$ be an
SRS. The \emph{single-step reduction relation} induced by $R$ is
a relation on $A^{\ast}$ denoted $\to_{R}$ which is defined by $w\to_{R}w^{\prime}$
if $w=xuy$ and $w^{\prime}=xvy$ for some $x,y\in A^{\ast}$ and
$u\to v\in R$. If $|x|=i$ we say that the rule $u\to v$ is being
used at position $i$ in the reduction $w\to_{R}w^{\prime}$. We denote
by $G_{S}$ the \emph{reduction graph} of $S$. It is the (directed)
graph defined as follows. The set of vertices of $G_{S}$ is the set
$A^{\ast}$ of all words over $A$. Given $w,w^{\prime}\in A^{\ast}$,
edges $w\to w^{\prime}$ correspond to tuples $(i,u\to v)$ where
$u\to v$ is a rule in $R$ and $w\to_{R}w^{\prime}$ is a one-step
reduction where $u\to v$ is being used at position $i$. If $S$
has only one rule, namely when $R$ is a singleton, we can identify
an edge only with the position $i$ where the unique rewrite rule
is being used. A path in the reduction graph is called a \emph{reduction
}of $S$.

\section{The embeddability of $M_{k}$ in the reduction graph of a one-rule
SRS\label{sec:Embeddability_section}}
\begin{defn}
Denote by $M_{k}$ the directed graph whose set of vertices is $\{x,y,z_{1},\ldots,z_{k}\}$
and for every $1\leq i\leq k$ there are two edges $x\to z_{i}$ and
$z_{i}\to y$. Note that $M_{k}$ is ``diamond shaped'', for instance,
$M_{3}$ is the Hasse diagram of the diamond lattice:

\begin{center}\begin{tikzpicture}

\path (0,0) node (x) {$x$};\path (1,1) node (z_1) {$z_1$};\path (1,0) node (z_2) {$z_2$}; \path (1,-1) node (z_3) {$z_3$};\path (2,0) node (y) {$y$};\draw[thick,->] (x) edge  (z_1); \draw[thick,->] (x)  edge (z_2); \draw[thick,->] (x)  edge (z_3); \draw[thick,->] (z_1) edge  (y); \draw[thick,->] (z_2) edge (y); \draw[thick,->] (z_3) edge   (y);

\end{tikzpicture}\end{center} 
\end{defn}
We want to consider the following question. Given a one-rule SRS $\text{\mbox{\ensuremath{S=\langle A\mid u\to v\rangle}}}$,
what is the maximal $k$ for which $M_{k}$ is isomorphic to a subgraph
of the reduction graph $G_{S}$?

We start with some simple observations. If $u=v$ then the reduction
graph contains only loops and even $M_{1}$ is not a embeddable in
$G_{S}$ so from now on we assume $u\neq v$. If $|A|=1$ then every
connected component of $G_{S}$ with more than one vertex is just
an (infinite) path graph. Therefore only $M_{1}$ is embeddable in
$G_{S}$ and we can assume from now on that $|A|>1$. Another simple
observation is that $M_{k}$ is embeddable in $G_{S}$ for $S=\langle A\mid u\to v\rangle$
if and only if it is embeddable in $G_{S^{-1}}$ where $S^{-1}$ is
the converse system $S^{-1}=\langle A\mid v\to u\rangle$. Therefore,
without loss of generality we can assume that $|u|\leq|v|$.

If an SRS $S=\langle A\mid u\to v\rangle$ satisfies both $|A|>1$ and
$u\neq v$, it is easy to see that $M_{2}$ is embeddable in $G_{S}$.
Indeed, choose a word $w\in A^{\ast}$ such that $uwv\neq vwu$ (for
instance, if $\max\{|u|,|v|\}<l$ we can choose $w=a^{l}b^{l}$).
The reduction graph of $S$ contains the subgraph

\begin{center}\begin{tikzpicture}

\path (0,0) node (uwu) {$uwu$};\path (1,1) node (uwv) {$uwv$};\path (1,-1) node (vwu) {$vwu$}; \path (2,0) node (vwv) {$vwv$};\draw[thick,->] (uwu) edge  (uwv); \draw[thick,->] (uwu)  edge (vwu);  \draw[thick,->] (uwv) edge  (vwv); \draw[thick,->] (vwu) edge (vwv); 

\end{tikzpicture}\end{center} 

which is isomorphic to $M_{2}$. The question left is whether there
are other values of $k$ for which $M_{k}$ is embeddable in $G_{S}$?
We split this question into several cases.

\subsection{\label{subsec:Left-cancellative-SRS}Left (right) cancellative SRSs}

Let $S=\langle A\mid u\to v\rangle$ be a one-rule SRS such that $u,v\neq1$.
We say that $S$ is \emph{left cancellative} if the first letter of
$u$ and $v$ are different. 
\begin{rem}
The term ``left cancellative'' comes from the well-known fact that
the first letter of $u$ and $v$ are different if and only if the
semigroup presented by $S$ is left cancellative, i.e., $ax=ay$ implies
$x=y$ (see \cite[Chapter II Theorem 2]{Adjan1966}, also stated clearly
in \cite[Theorem 16]{Lallement1995}). 

In this section we will prove that $M_{3}$ is not embeddable in $G_{S}$
if $S$ is a left cancellative SRS.
\end{rem}
Given a reduction of some SRS 
\[
x_{1}\to x_{2}\to\ldots\to x_{n}
\]
we want a way to mark letters that are involved in the rewriting.
For this we introduce a technical tool. Given a set of letters $A=\{a_{1},\ldots,a_{n}\}$
we define a set of ``decorated'' copies $A^{\bullet}=\{a_{1}^{\bullet},\ldots,a_{n}^{\bullet}\}$.
Let $u\in A^{\ast}$ and assume $u=u_{1}\ldots u_{k}$ where every
$u_{i}$ is a letter of $A$. We denote by $u^{\bullet}=u_{1}^{\bullet}\ldots u_{k}^{\bullet}$
a decorated copy of the word $u$. Denote by $\pi:A\cup A^{\bullet}\to A$
a function defined $\pi(a_{i})=\pi(a_{i}^{\bullet})=a_{i}$ which
clearly extends to a projection $\pi:(A\cup A^{\bullet})^{\ast}\to A^{\ast}.$
Now we can define:
\begin{defn}
Let $S=\langle A\mid R\rangle$ be an SRS. Define a new SRS, denoted
$\overline{S}=\langle\overline{A},\overline{R}\rangle$, in the following
way. The set of letters of $\overline{S}$ is $\overline{A}=A\cup A^{\bullet}$.
For every rule $u\to v$ in $R$ and for every word $\overline{u}\in(A\cup A^{\bullet})^{\ast}$
such that $\pi(\overline{u})=u$ the relation $\overline{R}$ will
have the rule $\overline{u}\to v^{\bullet}$.
\end{defn}
\begin{example}
\label{exa:Decoration_Example}If $S=\langle a,b\mid ab\to bba\rangle$
then the SRS $\overline{S}$ is 
\[
\overline{S}=\langle a,a^{\bullet},b,b^{\bullet}\mid ab\to b^{\bullet}b^{\bullet}a^{\bullet},\quad a^{\bullet}b\to b^{\bullet}b^{\bullet}a^{\bullet},\quad ab^{\bullet}\to b^{\bullet}b^{\bullet}a^{\bullet},\quad a^{\bullet}b^{\bullet}\to b^{\bullet}b^{\bullet}a^{\bullet}\rangle
\]
\end{example}
It is obvious that every reduction
\[
\overline{x}_{1}\to\ldots\to\overline{x}_{n}
\]
of $\overline{S}$ can be projected into a reduction of $S$
\[
\pi(\overline{x}_{1})\to\ldots\to\pi(\overline{x}_{n})
\]
by deleting all the ``decorations''. Moreover, it is easy to see
that every reduction of $S$
\[
x_{1}\to x_{2}\to\ldots\to x_{n}
\]
can be ``lifted'' into a reduction of $\overline{S}$
\[
\overline{x}_{1}\to\overline{x}_{2}\ldots\to\overline{x}_{n}
\]
such that $\pi(\overline{x_{i}})=x_{i}$ and $\overline{x_{1}}=x_{1}$.
The decorated letters in this reduction will be the letters that are
``involved'' in the reduction or ``affected'' by it.
\begin{example}
Consider the SRS $S$ in example \exaref{Decoration_Example} and
the reduction
\[
abaabb\overset{(3)}{\to}ababbab\overset{(2)}{\to}abbbabab\overset{(4)}{\to}abbbbbaab
\]
where the numbers over the arrows are the positions in which the rewrite
is being done. This reduction can be lifted to the reduction 
\[
abaabb\overset{(3,ab\to b^{\bullet}b^{\bullet}a^{\bullet})}{\to}abab^{\bullet}b^{\bullet}a^{\bullet}b\overset{(2,ab^{\bullet}\to b^{\bullet}b^{\bullet}a^{\bullet})}{\to}abb^{\bullet}b^{\bullet}a^{\bullet}b^{\bullet}a^{\bullet}b\overset{(4,a^{\bullet}b^{\bullet}\to b^{\bullet}b^{\bullet}a^{\bullet})}{\to}abb^{\bullet}b^{\bullet}b^{\bullet}b^{\bullet}a^{\bullet}a^{\bullet}b
\]

of the SRS $\overline{S}$.

The following observation about reductions in $\overline{S}$ will
be useful.
\end{example}
\begin{lem}
\label{lem:LeftCanLemma}Let $S=\langle A\mid u\to v\rangle$ be a
one-rule SRS and consider a reduction 
\[
x_{1}\to x_{2}\to\ldots\to x_{n}
\]
of $S$ and its lifting 
\[
\overline{x}_{1}\to\overline{x}_{2}\to\ldots\to\overline{x}_{n}
\]
to a reduction of $\overline{S}$. Assume that the first decorated
letter of $\overline{x}_{n}$ is at position $i$ then
\begin{enumerate}
\item \label{enu:NonExistance}No step in the reduction is carried out at
position $j$ for $j<i$.
\item \label{enu:Existance}There is a step in the reduction carried out
at position $i$.
\item \label{enu:Uniqueness}If $S$ is left cancellative then the letter
at position $i$ of $x_{1}$ is the first letter of $u$ and the letter
at position $i$ of $x_{n}$ is the first letter of $v$.
\end{enumerate}
\end{lem}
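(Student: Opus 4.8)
The plan is to follow the two quantities that evolve along the lifted reduction $\overline{x}_{1}\to\cdots\to\overline{x}_{n}$ — the position at which each step rewrites, and which letters carry a decoration — and to exploit one structural fact throughout. That fact is: a step performed at position $j$ replaces the matched factor by the \emph{fully} decorated word $v^{\bullet}$ starting at position $j$, and leaves the length-$j$ prefix of the word (letters and decorations alike) completely unchanged. Hence the letter at a position $p$ changes in passing from $\overline{x}_{t}$ to $\overline{x}_{t+1}$ only if that step rewrites at a position $\le p$; in particular, once position $p$ carries a decorated letter, that letter survives as long as every later step rewrites at a position $>p$. I would isolate this ``freezing'' observation first, since all three parts reduce to it. Note $v\neq 1$ here (as $u\neq v$ and $|u|\le|v|$), so each rewrite really does create the decorated letter $v_{1}^{\bullet}$, where $v_{1}$ is the first letter of $v$.

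For part (1) I argue by contradiction. Suppose some step rewrites at a position $<i$, and let $t$ be the \emph{last} such step, at position $j_{t}<i$. Then $\overline{x}_{t+1}$ carries $v_{1}^{\bullet}$ at position $j_{t}$, and by maximality of $t$ every later step rewrites at a position $\ge i>j_{t}$; by the freezing observation the letter at position $j_{t}$ is never touched again, so $\overline{x}_{n}$ still carries a decorated letter at position $j_{t}<i$. This contradicts the assumption that $i$ is the first decorated position of $\overline{x}_{n}$.

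For part (2) I combine part (1) with the same idea. Position $i$ of $\overline{x}_{1}=x_{1}$ is undecorated, whereas position $i$ of $\overline{x}_{n}$ is decorated. By part (1) no step acts to the left of $i$, and a step rewriting at a position $>i$ leaves position $i$ frozen; thus the only steps capable of altering the letter at position $i$ are those rewriting exactly at $i$. Since that letter does change (from undecorated to decorated), at least one such step must occur.

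For part (3) I read off the letter at position $i$ at the two ends, using the first and the last step that rewrite at position $i$ (both exist by part (2)). All steps preceding the first rewrite at $i$ occur, by part (1), at positions $>i$, so positions $0,\dots,i$ are frozen up to that step; hence position $i$ of $x_{1}$ equals $\pi$ of the first letter of the matched occurrence of $u$, namely the first letter of $u$. Dually, after the last rewrite at $i$ every remaining step is at a position $>i$, so position $i$ of $\overline{x}_{n}$ is the leading letter $v_{1}^{\bullet}$ of the freshly written $v^{\bullet}$, which projects to the first letter of $v$. Left cancellativity is the hypothesis under which this is recorded and applied: it guarantees the first letters of $u$ and $v$ are distinct, so the two readings genuinely exhibit position $i$ as the place where the leading letter switches from that of $u$ to that of $v$. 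The step requiring the most care — and the main obstacle — is the position bookkeeping under the length change $|u|\le|v|$: one must verify rigorously that a rewrite at position $j$ leaves every position $<j$ untouched, so that decorations at frozen positions truly persist and positions can be compared meaningfully across the words $\overline{x}_{t}$, whose lengths differ.
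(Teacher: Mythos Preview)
Your argument is correct. The paper dismisses parts (1) and (2) as ``clear'' and proves only (3), so your careful treatment of (1) and (2) via the freezing observation is more detailed than but consistent with the paper's intent.

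For part (3) you take a genuinely different route. The paper picks \emph{any} step at position $i$, observes that after it the letter at position $i$ is the first letter $b$ of $v$, and then invokes left cancellativity ($a\neq b$) to conclude that no \emph{further} step can occur at position $i$; this yields uniqueness of the step at $i$ as a byproduct, from which both endpoint claims follow. You instead work with the \emph{first} and \emph{last} steps at position $i$ and read off the letters directly, never using the hypothesis that the first letters of $u$ and $v$ differ. Your remark that left cancellativity merely ensures the two readings are distinct is exactly right: your argument actually establishes (3) without that hypothesis, so you have proved a slightly stronger statement. What the paper's route buys is the extra conclusion that (under left cancellativity) there is a \emph{unique} step at position $i$, which is not part of the lemma's statement but clarifies the picture; what your route buys is independence from the cancellativity hypothesis and a cleaner separation between the combinatorics of decorations and the algebraic assumption on $S$.
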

\begin{proof}
Statements (\ref{enu:NonExistance}) and (\ref{enu:Existance}) are
clear so we will prove (\ref{enu:Uniqueness}). Denote by $a$ the
first letter of $u$ and by $b$ the first letter of $v$. Assume
that in the step $x_{k}\to x_{k+1}$ the rewrite rule is carried out
at position $i$ (such step exists by (\ref{enu:Existance})). Therefore,
the letter at position $i$ of $x_{k}$ is $a$ and the letter at
position $i$ of $x_{k+1}$ is $b$. Since no step is carried out
at position $j$ for $j<i$, the first letter of $x_{1}$ is also
$a$. In addition, the first letter of $u$ and $v$ are different
so we can not carry out any step at position $i$ in the reduction
$x_{k+1}\to\ldots\to x_{n}$. Therefore, the letter at position $i$
of $x_{n}$ is $b$ as required.
\end{proof}
\begin{lem}
\label{lem:SameDecoratedPos}Let $S=\langle A\mid u\to v\rangle$
be a left cancellative SRS and let $x\to z_{1}\to y$ and $x\to z_{2}\to y$
be two reductions in $S$. Denote the corresponding ``lifted'' reductions
in $\bar{S}$ by 
\[
\overline{x}\to\overline{z_{1}}\to\overline{y_{1}},\quad\overline{x}\to\overline{z_{2}}\to\overline{y_{2}}
\]
(a priori, $\overline{y_{1}}\neq\overline{y_{2}}$ because they might
have different decorations). Then, the first decorated positions of
$\overline{y_{1}}$ and $\overline{y_{2}}$ are equal.
\end{lem}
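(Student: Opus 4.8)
The plan is to use Lemma~\ref{lem:LeftCanLemma} to show that both reductions reach the ``same'' first decorated position, and that this position is where the two reductions are forced to agree on the relevant letter. Let $a$ be the first letter of $u$ and $b$ the first letter of $v$. First I would apply part~(\ref{enu:Uniqueness}) of Lemma~\ref{lem:LeftCanLemma} to the \emph{whole} reduction $x\to z_{1}\to y$: if $i_{1}$ denotes the first decorated position of $\overline{y_{1}}$, then the letter at position $i_{1}$ of $x$ is $a$ and the letter at position $i_{1}$ of $y$ is $b$. Similarly, letting $i_{2}$ be the first decorated position of $\overline{y_{2}}$, the letter at position $i_{2}$ of $x$ is $a$ and the letter at position $i_{2}$ of $y$ is $b$. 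The goal is to prove $i_{1}=i_{2}$.

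The key structural fact I would extract from part~(\ref{enu:NonExistance}) is that $i_{1}$ (respectively $i_{2}$) is the leftmost position at which \emph{any} rewrite occurs along the reduction $x\to z_{1}\to y$ (respectively $x\to z_{2}\to y$): by (\ref{enu:NonExistance}) no step occurs strictly to the left of the first decorated position, and by (\ref{enu:Existance}) a step does occur at that position. So $i_{1}$ and $i_{2}$ are both characterized as ``the leftmost position touched by a reduction from $x$ to $y$.'' Since both reductions share the same source $x$ and the same target $y$, I would argue that this leftmost-touched position must be the same. Concretely, suppose without loss of generality $i_{1}<i_{2}$. Then along $x\to z_{2}\to y$ no rewrite ever touches position $i_{1}$, so the letter at position $i_{1}$ is unchanged from $x$ to $y$; hence it equals $a$ in both $x$ and $y$. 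But from the first reduction we know the letter at position $i_{1}$ of $y$ is $b\neq a$, a contradiction. By symmetry $i_{2}<i_{1}$ is also impossible, forcing $i_{1}=i_{2}$.

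The main obstacle I anticipate is making the ``leftmost-touched position is unchanged'' argument fully rigorous, since positions can shift when $|u|\neq|v|$ and a rewrite to the \emph{right} of position $i_{1}$ could in principle alter which letter sits at position $i_{1}$. I would handle this by noting that a rewrite at position $j$ only alters letters at positions $\geq j$, so a rewrite at any $j>i_{1}$ leaves position $i_{1}$ untouched, and a rewrite at $j=i_{1}$ is exactly what is excluded in the second reduction under the assumption $i_{1}<i_{2}$; thus positions $0,\ldots,i_{1}$ are a common stable prefix along $x\to z_{2}\to y$, which justifies that the letter at position $i_{1}$ is preserved. With that observation the contradiction above closes the proof.
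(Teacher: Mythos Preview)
Your proof is correct and follows essentially the same route as the paper: assume $i_{1}<i_{2}$, use part~(\ref{enu:Uniqueness}) of Lemma~\ref{lem:LeftCanLemma} on the first reduction to conclude the letter at position $i_{1}$ of $y$ is $b$, and use part~(\ref{enu:NonExistance}) on the second reduction to conclude that same letter is $a$, yielding a contradiction. Your added discussion of why a rewrite at position $j>i_{1}$ cannot disturb position $i_{1}$ makes explicit a point the paper leaves implicit, but otherwise the arguments coincide.
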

\begin{proof}
Denote by $i_{1}$ ($i_{2}$) the first decorated position of $\overline{y_{1}}$
(respectively, $\overline{y_{2}}$). We continue to use $a$ for the
first letter of $u$ and $b$ for the first letter of $v$. Assume
without loss of generality that $i_{1}<i_{2}$. Applying part (\ref{enu:Uniqueness})
of \lemref{LeftCanLemma} on the reduction $x\to z_{1}\to y$, we
obtain that $b$ is the letter at position $i_{1}$ of $y$ and $a$
is the letter at position $i_{1}$ of $x$. Applying part (\ref{enu:NonExistance})
of \lemref{LeftCanLemma} on $x\to z_{2}\to y$, we obtain that $a$
is the letter at position $i_{1}$ of $y$ (since there are no steps
carried out in this reduction at position $j$ for $j<i_{2}$). This
is a contradiction so $i_{1}=i_{2}$ as required.
\end{proof}
\begin{prop}
\label{prop:M3NotSubGraphOfLeftCanc}Let $S=\langle A\mid u\to v\rangle$
be a left cancellative SRS. Then $M_{3}$ is not isomorphic to a subgraph
of $G_{S}$.
\end{prop}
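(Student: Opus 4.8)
The plan is to argue by contradiction using the machinery already developed in \lemref{SameDecoratedPos}. Suppose $M_{3}$ embeds in $G_{S}$, so there exist words $x,y$ and three distinct words $z_{1},z_{2},z_{3}$ together with edges giving three reductions $x\to z_{j}\to y$ for $j=1,2,3$. Since any edge of $G_{S}$ is a single-step reduction, each $z_{j}$ is obtained from $x$ by one application of the rule at some position, and $y$ is obtained from $z_{j}$ by one further application. For each of the three two-step reductions I would lift to $\overline{S}$, obtaining decorated words $\overline{y_{1}},\overline{y_{2}},\overline{y_{3}}$ all projecting to $y$.

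The key input is \lemref{SameDecoratedPos}, which tells me that any two of these lifted reductions produce the same first decorated position; call this common position $i$. By part (\ref{enu:Existance}) and (\ref{enu:Uniqueness}) of \lemref{LeftCanLemma}, in each reduction $x\to z_{j}\to y$ there is a step carried out at position $i$, the letter at position $i$ of $x$ is the first letter $a$ of $u$, and the letter at position $i$ of $y$ is the first letter $b$ of $v$. Crucially, because $S$ is left cancellative, once the step at position $i$ is performed the position can never be rewritten again (its first letter becomes $b\neq a$), and by part (\ref{enu:NonExistance}) no step occurs at any position $j<i$ in any of the three reductions. So in each reduction, exactly one of the two steps is the step at position $i$.

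I would then split into the two natural cases according to \emph{which} of the two steps $x\to z_{j}$ or $z_{j}\to y$ is the one carried out at position $i$. In the first case the first step $x\to z_{j}$ uses position $i$; since no earlier position is ever touched and position $i$ is determined, the word $z_{j}$ is forced: it is obtained from $x$ by rewriting at the (unique) position $i$, hence $z_{j}$ is the same word for all such indices. In the second case the second step $z_{j}\to y$ uses position $i$; then the first step $x\to z_{j}$ occurs at some position $j'>i$ and is reversible only through the structure of $y$, but running \lemref{LeftCanLemma} with $x_{n}=z_{j}$ shows that $z_{j}$ still has first decorated position $i$ contradicting that the step at $i$ has not yet happened — forcing these $z_{j}$ to coincide as well, or ruling the case out. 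In either situation I cannot produce three \emph{distinct} middle vertices, contradicting the assumed embedding of $M_{3}$.

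The main obstacle I anticipate is the bookkeeping in the second case: I must verify carefully that when the position-$i$ step is the second step, the first step is genuinely pinned down enough that two indices cannot give different $z_{j}$. The clean way to handle this is to observe that both steps are single-step reductions (edges), so each $z_{j}$ differs from $x$ by replacing one occurrence of $u$ by $v$ at a definite position; combined with the fact that all three reductions share the same position $i$ where $a$ turns into $b$ and touch no position before $i$, the position of the \emph{other} step is also forced by comparing $x$ and $y$, leaving at most two candidate words for the $z_{j}$ rather than three. I would make this precise by reusing \lemref{LeftCanLemma} on each of the length-one segments and counting positions, which should close the argument without heavy computation.
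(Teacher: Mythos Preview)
Your approach is essentially the paper's: lift the three reductions, invoke \lemref{SameDecoratedPos} to get a common first decorated position $i$, observe that in each reduction exactly one of the two steps is at position $i$, and then pigeonhole on which step that is. The first case (step $x\to z_{j}$ at position $i$ forces $z_{j}$ to be the unique word obtained from $x$ by rewriting at $i$) is handled correctly.

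The gap is in your second case. Your attempt to apply \lemref{LeftCanLemma} to the one-step lift ending at $z_{j}$ does not give first decorated position $i$: if the first step $x\to z_{j}$ is at some $j'>i$, then the first decorated position of $\overline{z_{j}}$ is $j'$, not $i$, so there is nothing to contradict. Your fallback (``the position of the other step is also forced by comparing $x$ and $y$'') is also not the right mechanism---the other position need not be uniquely determined by $x$, $y$, $i$ alone. The clean observation, which the paper uses directly, is dual to your first case: if $z_{j}\overset{(i)}{\to}y$, then $z_{j}$ is uniquely determined by $y$ and $i$, namely $z_{j}=\alpha u\beta$ where $y=\alpha v\beta$ with $|\alpha|=i$. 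Hence any two indices falling in the second case have equal $z_{j}$. With that in place, the pigeonhole (at most one $z_{j}$ from each case, so at most two distinct $z_{j}$) finishes exactly as you intend.
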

\begin{proof}
Consider three reductions 
\[
x\to z_{1}\to y,\quad x\to z_{2}\to y,\quad x\to z_{3}\to y
\]
such that $z_{1},z_{2},z_{3}$ are all distinct and lift them into
three reductions in $\overline{S}$
\[
\overline{x}\to\overline{z_{1}}\to\overline{y_{1}},\quad\overline{x}\to\overline{z_{2}}\to\overline{y_{2}},\quad\overline{x}\to\overline{z_{3}}\to\overline{y_{3}}.
\]
According to \lemref{SameDecoratedPos} the first decorated positions
of $\overline{y_{1}}$, $\overline{y_{2}}$ and $\overline{y_{3}}$
are identical. Denote this position by $i$. Part (\ref{enu:Existance})
of \lemref{LeftCanLemma} implies that in each one of the three reduction
there is a rewrite step carried out at position $i$. Without loss
of generality we assume that in the first reduction this is the first
step 
\[
x\overset{(i)}{\to}z_{1}.
\]
In the second reduction this cannot be the first step 
\[
x\overset{(i)}{\to}z_{2}
\]
because this will imply $z_{1}=z_{2}$ in contradiction to our assumption.
Therefore, this must be the second step
\[
z_{2}\overset{(i)}{\to}y.
\]
For the third reduction we cannot have 
\[
x\overset{(i)}{\to}z_{3}
\]
as this implies $z_{1}=z_{3}$ and we cannot have 
\[
z_{3}\overset{(i)}{\to}y
\]
as this implies $z_{2}=z_{3}$. This is a contradiction which finishes
the proof.
\end{proof}
\begin{rem}
Clearly, a dual result holds for right cancellative SRSs.
\end{rem}

\subsection{\label{subsec:Unbordered_SRSs}SRSs where $v$ is not bordered with
$u$}

In this section we generalize the results of \subsecref{Left-cancellative-SRS}
to a wider class of SRSs.
\begin{prop}
Let $S=\langle A\mid u\to v\rangle$ be an SRS such that $u$ is not
a prefix of $v$, then $M_{3}$ is not embeddable in $G_{S}$.
\end{prop}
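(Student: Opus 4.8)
The plan is to adapt the decoration machinery from the left-cancellative case, but now track the \emph{first decorated position} instead of relying on a single mismatched letter. The key idea is that when $u$ is not a prefix of $v$, the words $u$ and $v$ first disagree at some position $p < |u|$ (reading left to right, up to the point where $u$ runs out or where they differ). Let $a$ be the letter of $u$ at position $p$ and $b\neq a$ the letter of $v$ at position $p$. I would prove an analogue of \lemref{LeftCanLemma}: if the first decorated letter of $\overline{x}_n$ occurs at position $i$, then no rewrite step is ever carried out at any position $j<i$, some step \emph{is} carried out at position $i$, and — this is the crucial refinement — after the first step at position $i$, no \emph{further} step can ever again be carried out at position $i$. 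The last point is what replaces left cancellativity: once the rule fires at position $i$, the letters in the window $[i, i+p]$ become the prefix of $v$ up through position $p$, which starts with the $v$-pattern rather than the $u$-pattern at the critical offset $p$, so the rule cannot reapply at position $i$ without the $u$-pattern matching there again.

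The main obstacle is precisely establishing this ``no-reapplication at position $i$'' claim, and it is more delicate than in the cancellative case because a later rewrite at a position $j>i$ could in principle alter the letters sitting at offsets $i+1,\dots,i+|u|-1$ and thereby restore a copy of $u$ starting at $i$. I would argue as follows: since $i$ is the first decorated position of the final word, every letter strictly to the left of $i$ is undecorated and untouched throughout the reduction, so any step at a position $j<i$ is excluded. A step at position $i$ produces $v$ there; because $u$ and $v$ agree on positions $0,\dots,p-1$ but differ at position $p$ (with $p<|u|$), the letter now sitting at absolute position $i+p$ is $b$. For the rule to fire at $i$ again, the factor of length $|u|$ starting at $i$ must equal $u$, which forces position $i+p$ to hold $a\neq b$. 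Hence some intervening step at a position in the range $[i+1, i+p]$ would have to rewrite that letter. I would need to rule this out, or more carefully, bound the positions that a confirmed step at $i$ depends on, using that $p$ is the \emph{first} disagreement — this is the technical heart of the argument.

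Once the refined lemma is in place, the proof of non-embeddability mirrors \propref{M3NotSubGraphOfLeftCanc} almost verbatim. I would prove the analogue of \lemref{SameDecoratedPos}: for two reductions $x\to z_1\to y$ and $x\to z_2\to y$ with liftings in $\overline{S}$, the first decorated positions $i_1,i_2$ of $\overline{y_1},\overline{y_2}$ coincide. Supposing $i_1<i_2$, part~(\ref{enu:Existance}) of the refined lemma gives a step at position $i_1$ in the first reduction, so the letter at position $i_1+p$ of $y$ is $b$, while part~(\ref{enu:NonExistance}) applied to the second reduction shows no step ever occurs at position $i_1<i_2$, so the letter at position $i_1$ (and the window around it, including $i_1+p$) of $y$ agrees with that of $x$, which is the $a$-pattern — a contradiction.

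Finally, given three reductions $x\to z_k\to y$ for $k=1,2,3$ with all $z_k$ distinct, the common first decorated position $i$ hosts exactly one rewrite step in each reduction (existence from part~(\ref{enu:Existance}), uniqueness from the no-reapplication claim). The step at $i$ is therefore either the first step $x\to z_k$ or the second step $z_k\to y$ in each reduction. If it is the first step in two of them, those two $z_k$ coincide; if it is the second step in two of them, then the common word reached after removing that step forces those two $z_k$ to coincide as well. With only two slots (first or second) and three distinct reductions, the pigeonhole principle produces an equality $z_k=z_{k'}$, contradicting distinctness. This completes the argument, and a dual statement handles the case where $u$ is not a suffix of $v$, yielding the non-embeddability of $M_3$ whenever $v$ is not bordered with $u$.
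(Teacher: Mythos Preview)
Your direct approach can be made to work, but as written it has a genuine gap, and the paper bypasses it entirely with a one-line reduction.

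The gap is in your \lemref{SameDecoratedPos} analogue. From ``no step at position $<i_2$'' you conclude that the whole window $[i_1,i_1+p]$ of $y$ agrees with that of $x$ in the second reduction. But a step at position $j\ge i_2$ fixes only positions $<j$; nothing prevents $i_1+p\ge i_2$, in which case the window is not protected by your argument. The same issue underlies your acknowledged ``no-reapplication'' difficulty: a later step at some $j\in(i,i+p]$ could, a priori, overwrite position $i+p$. What actually rescues both points is the observation you gesture at but never state: because $u$ and $v$ agree on their first $p$ letters, \emph{every} step at position $j$ leaves positions $0,\ldots,j+p-1$ unchanged and flips position $j+p$ from $u[p]$ to $v[p]$. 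With this in hand, your entire outline goes through (the minimum step position $i$ satisfies: position $i+p$ can only be altered by a step at $i$ itself, and once set to $v[p]$ it blocks any further step there; and in the two-reduction comparison, steps at positions $\ge i_2$ fix all positions $<i_2+p$, which includes $i_1+p$).

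The paper's proof packages exactly this observation as a reduction rather than unfolding it. Writing $u=pu'$ and $v=pv'$ with $p$ the maximal common prefix, the system $S'=\langle A\mid u'\to v'\rangle$ is left cancellative, and every $S$-edge at position $j$ is literally an $S'$-edge at position $j+|p|$; hence $G_S$ is a subgraph of $G_{S'}$, and \propref{M3NotSubGraphOfLeftCanc} finishes. Your decoration argument at position $i+p$ is precisely the left-cancellative argument for $S'$ at its first decorated position, done by hand. The reduction buys you a two-line proof and reuses the earlier proposition verbatim; your route would reprove that proposition in disguise.
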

\begin{proof}
Denote by $p$ the maximal prefix of $u$ which is also a prefix of
$v$. Therefore, we can write $u=pu^{\prime}$ and $v=pv^{\prime}$
for some words $u^{\prime}$, $v^{\prime}$. It might be the case
that $p=1$ (if $S$ is left cancellative) but note that $u^{\prime}\neq1$
since $u$ is not a prefix of $v$ and $v^{\prime}\neq1$ since we
are assuming $|u|\leq|v|$. The maximality of $p$ implies that the
SRS defined by $S^{\prime}=\langle A\mid u^{\prime}\to v^{\prime}\rangle$
is left cancellative. Now, note that any reduction $x\to y$ which
is carried out using the rule $pu^{\prime}\to pv^{\prime}$ can be
carried out using the rule $u^{\prime}\to v^{\prime}$. Therefore,
$G_{S}$ is a subgraph of $G_{S^{\prime}}$. Since $M_{3}$ is not
embeddable in $G_{S^{\prime}}$ by \propref{M3NotSubGraphOfLeftCanc}
it is not embeddable in $G_{S}$ as well.
\end{proof}
Clearly, a dual result holds for SRSs where $u$ is not a suffix of
$v$ so we can conclude:
\begin{prop}
Let $S=\langle A\mid u\to v\rangle$ be an SRS. If $v$ is not bordered
with $u$ (i.e., $u$ is not a prefix of $v$ or not a suffix of $v$)
then $M_{3}$ is not embeddable in $G_{S}$.
\end{prop}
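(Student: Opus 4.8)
The plan is to unfold the definition of ``not bordered'' into a disjunction and then dispatch each disjunct separately. By definition, $v$ is bordered with $u$ precisely when $u$ is simultaneously a prefix and a suffix of $v$; hence the hypothesis that $v$ is \emph{not} bordered with $u$ says exactly that $u$ is not a prefix of $v$, or $u$ is not a suffix of $v$ (possibly both). So I would split into these two cases. In the first case, where $u$ is not a prefix of $v$, the desired conclusion is literally the content of the previous proposition, and nothing further is needed.

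For the second case, where $u$ is not a suffix of $v$, the idea is to reduce to the prefix case through the mirror symmetry of words. First I would introduce the reversal map $w\mapsto w^{\mathrm{rev}}$ sending $w=w_{1}w_{2}\cdots w_{k}$ to $w_{k}\cdots w_{2}w_{1}$, and consider the reversed system $S^{\mathrm{rev}}=\langle A\mid u^{\mathrm{rev}}\to v^{\mathrm{rev}}\rangle$. Reversal is an involution on $A^{\ast}$, and a single-step reduction $xuy\to xvy$ carried out at position $|x|$ is transformed into the single-step reduction $y^{\mathrm{rev}}u^{\mathrm{rev}}x^{\mathrm{rev}}\to y^{\mathrm{rev}}v^{\mathrm{rev}}x^{\mathrm{rev}}$ of $S^{\mathrm{rev}}$, carried out at position $|y|$ using the rule $u^{\mathrm{rev}}\to v^{\mathrm{rev}}$. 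Thus the vertex bijection $w\mapsto w^{\mathrm{rev}}$ sends edges of $G_{S}$ to edges of $G_{S^{\mathrm{rev}}}$ and, being a bijection with inverse given by reversal again, is a direction-preserving graph isomorphism $G_{S}\cong G_{S^{\mathrm{rev}}}$. In particular $M_{3}$ is embeddable in $G_{S}$ if and only if it is embeddable in $G_{S^{\mathrm{rev}}}$.

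To close the second case, I would observe that $u$ is a suffix of $v$ if and only if $u^{\mathrm{rev}}$ is a prefix of $v^{\mathrm{rev}}$, so the hypothesis ``$u$ is not a suffix of $v$'' is precisely the statement that $u^{\mathrm{rev}}$ is not a prefix of $v^{\mathrm{rev}}$. Applying the previous proposition to $S^{\mathrm{rev}}$ then gives that $M_{3}$ is not embeddable in $G_{S^{\mathrm{rev}}}$, and the isomorphism above transports this to $G_{S}$. The only point that needs care, and it is entirely routine, is checking that reversal genuinely is an edge-preserving bijection of reduction graphs, i.e., that positions transform correctly under mirroring; this is exactly the ``dual result'' already flagged as clear in the remark following \propref{M3NotSubGraphOfLeftCanc}. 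Since the proposition is nothing more than the prefix statement combined with its mirror image, I do not expect any real obstacle.
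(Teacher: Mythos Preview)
Your proposal is correct and matches the paper's approach: the paper likewise splits into the prefix and suffix cases, invokes the previous proposition for the prefix case, and for the suffix case simply asserts that ``clearly, a dual result holds''. Your use of the reversal map $w\mapsto w^{\mathrm{rev}}$ to make the duality explicit is a perfectly good way to justify that line; the paper's implicit dual would instead strip the maximal common suffix to land in a right-cancellative system, but these are two sides of the same coin.
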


\subsection{\label{subsec:Special_SRSs}Special one-rule SRSs }

In this section we deal with SRSs of the form $S=\langle A\mid1\to v\rangle$.
We remark that SRSs of the form $\langle A\mid v_{i}\to1\rangle$
are called \emph{special} (see \cite[ Definition 3.4.1]{Book1993}).
We have already mentioned that $M_{k}$ is embeddable in $G_{S}$
if and only if it is embeddable in $G_{S^{-1}}$ where $S^{-1}$ is
the converse system. So we can say that in this section we consider
special one-rule SRSs. There are few subcases.
\begin{lem}
\label{lem:v_one_letter}If $v=b^{n}$ for some letter $b\in A$ then
$M_{3}$ is not embeddable in $G_{S}$.
\end{lem}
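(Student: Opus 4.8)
The plan is to prove something slightly stronger than stated: for \emph{any} pair of words $x,y\in A^{\ast}$, the number of distinct words $z$ for which there is a length-two reduction $x\to z\to y$ in $S$ is at most $2$. This suffices, since an embedding of $M_{3}$ into $G_{S}$ would send the source and sink of $M_{3}$ to a fixed pair of words $x,y$ and the three middle vertices $z_{1},z_{2},z_{3}$ to three \emph{distinct} words, each admitting reductions $x\to z_{i}$ and $z_{i}\to y$; three distinct intermediate words on such length-two paths cannot exist. Observe that $n\geq1$, since $v=b^{n}\neq1$.

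The main device is the decomposition of a word into maximal blocks of the letter $b$. Writing the non-$b$ letters of $x$ in order as $a_{1},\ldots,a_{r}$, we have $x=b^{s_{0}}a_{1}b^{s_{1}}\cdots a_{r}b^{s_{r}}$ with each $s_{i}\geq0$, and similarly $y=b^{t_{0}}a_{1}b^{t_{1}}\cdots a_{r}b^{t_{r}}$ with the same non-$b$ letters in the same order (the only available rule inserts a block of $b$'s, so it can neither create, destroy, nor reorder the non-$b$ letters). The crucial observation, and the only place the hypothesis $v=b^{n}$ is genuinely used, is that one reduction step increases the length of exactly one of these $b$-blocks by $n$, and the resulting word depends only on \emph{which} block is enlarged, not on the precise insertion position inside or adjacent to it. Consequently the distinct one-step successors of a word are in bijection with its $r+1$ blocks (empty blocks included).

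With this bijection established, I would track the difference vector $d_{i}=t_{i}-s_{i}\geq0$. A reduction $x\to z\to y$ consists of two insertions of $b^{n}$, each adding $n$ to one block, so $y$ is reachable from $x$ in two steps precisely when either $d_{j}=2n$ for a single index $j$ (and $d_{i}=0$ otherwise), or $d_{j}=d_{\ell}=n$ for two indices $j\neq\ell$ (and $d_{i}=0$ otherwise). In the first case both insertions must land in block $j$, so the unique intermediate word is $x$ with its $j$-th block enlarged to length $s_{j}+n$. In the second case the two insertions land in blocks $j$ and $\ell$ in one of the two possible orders, yielding the word obtained from $x$ by enlarging block $j$ by $n$ and the word obtained by enlarging block $\ell$ by $n$; these are distinct because $j\neq\ell$ and $n\geq1$. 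Thus at most two intermediate words arise in every case, and $M_{3}$ is not embeddable.

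The step I expect to demand the most care is the bijection between one-step successors and $b$-blocks: I must check that inserting $b^{n}$ at any two positions lying within the same maximal $b$-block produces one and the same word, whereas insertions separated by some $a_{i}\neq b$ produce different words. This merging phenomenon is exactly what the single-letter form of $v$ provides, and it is what keeps the successor count small; for a general $v$ distinct insertion positions need not coincide and the count could exceed two. The residual bookkeeping (empty end blocks, and the degenerate case $r=0$ in which $x$ is itself a power of $b$ and only one intermediate word ever occurs) is routine and respects the bound.
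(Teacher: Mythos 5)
Your proposal is correct and follows essentially the same route as the paper: the paper likewise identifies a word with the tuple of lengths of its maximal $b$-blocks, observes that a one-step reduction adds $n$ to exactly one entry, and concludes that a two-step reduction from $x$ to $y$ admits at most two distinct intermediate words. Your write-up merely spells out the final case analysis (one block raised by $2n$ versus two blocks each raised by $n$) in more detail than the paper does.
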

\begin{proof}
Any word $x\in A^{\ast}$ can be uniquely decomposed into 
\[
x=b^{m_{0}}a_{i_{1}}b^{m_{1}}a_{i_{2}}b^{m_{2}}\cdots b^{m_{l-1}}a_{i_{l}}b^{m_{l}}
\]
where $a_{i_{1}},\ldots,a_{i_{l}}\in A$ are letters distinct from
$b$ and $m_{0},\ldots,m_{l}$ are non-negative integers. If $x\to z$
is a one-step reduction then 
\[
z=b^{m_{0}^{\prime}}a_{i_{1}}b^{m_{1}^{\prime}}a_{i_{2}}b^{m_{2}^{\prime}}\cdots b^{m_{l-1}^{\prime}}a_{i_{l}}b^{m_{l}^{\prime}}
\]
such that $m_{i}^{\prime}=m_{i}+n$ for some $i\in\{0,\ldots,l\}$
and $m_{j}^{\prime}=m_{j}$ if $j\neq i$. It is clear that we can
identify $x$ with the tuple $(m_{0},\ldots,m_{l})$ and a one-step
reduction is equivalent to adding $n$ to one of the entries. Therefore,
a two step reduction $x\to z_{1}\to y$ is equivalent to adding $n$
to two of the entries (or twice to the same one). Now, it is clear
that there could be at most one additional reduction $x\to z_{2}\to y$
from $x$ to $y$ with $z_{1}\neq z_{2}.$ This finishes the proof.
\end{proof}
\begin{lem}
\label{lem:One_To_ab_Case}For any $k\in\mathbb{N}$, the graph $M_{k}$
is embeddable in $G_{S}$ for $\text{\mbox{\ensuremath{S=\langle A\mid1\to ab\rangle}}}$.
\end{lem}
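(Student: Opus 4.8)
The plan is to exhibit, for each $k$, an explicit copy of $M_k$ built from a self-similar word. The cases $k\le 2$ are already settled (we saw that $M_2$ embeds in any non-trivial $G_S$), so I would assume $k\ge 3$. Writing $w=aabb$, I would take as the bottom and top of the diamond the two words
\[
x=ab\,w^{k-3}\,ab,\qquad y=ab\,w^{k-2}\,ab ,
\]
so that $|y|=|x|+4$, exactly the length gap forced by two applications of the rule $1\to ab$. The task is then to produce $k$ pairwise distinct midpoints $z_1,\dots,z_k$ with $x\to z_i\to y$ for every $i$, which together furnish the required $k+2$ vertices and $2k$ edges.

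The midpoints I would use are the $k$ words obtained from $y$ by deleting a single occurrence of $ab$: there are exactly $k$ such occurrences, namely the leading $ab$, the trailing $ab$, and the internal $ab$ inside each of the $k-2$ blocks $w=aabb$ (no further occurrences arise at the block seams, since each seam reads $ba$). Re-inserting the deleted $ab$ shows $z_i\to y$ at once, so the whole content is the claim $x\to z_i$, i.e. that each $z_i$ is itself a single $ab$-insertion of the common word $x$. I would verify this by deleting one more $ab$ from each $z_i$ and checking that $x$ is recovered: deleting the leading (resp. trailing) $ab$ leaves an $aabb$ block at that end, whose internal $ab$ collapses it to $ab$ and returns $x$; deleting the internal $ab$ of the $j$-th block turns that block into $ab$, and deleting this now-isolated ground occurrence fuses the neighbouring blocks, giving $ab\,w^{j-1}\,w^{k-2-j}\,ab=ab\,w^{k-3}\,ab=x$. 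Hence all $k$ first deletions can be completed to the \emph{same} bottom word $x$.

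Finally I would record distinctness: $x$, the $z_i$, and $y$ occupy three different length classes, and the $z_i$ are mutually distinct because the location of the lone ``defect'' differs in each case—an $aa$ prefix for the leading deletion, a $bb$ suffix for the trailing deletion, and a collapsed ground $ab$ in position $j$ for the deletion inside the $j$-th block. This completes the embedding of $M_k$. The genuinely delicate point is the reconvergence: one must arrange that the $k$ distinct upper deletions of $y$ all descend to a single word $x$, and it is precisely the repeated padding $w^{k-2}$ that makes every midpoint collapse back to $ab\,w^{k-3}\,ab$. I expect this common-bottom verification to be the only real obstacle, the distinctness and the edge relations being routine bookkeeping.
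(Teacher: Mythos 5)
Your construction is correct, and it is essentially the paper's argument in mirror image: the paper takes $x=(aabb)^{k-1}$, $z_i=(aabb)^i ab(aabb)^{k-1-i}$, $y=(aabb)^k$, which is exactly the ``delete each occurrence of $ab$ from a power of $aabb$'' idea you use, only with a more uniform witness that needs no separate leading/trailing/internal cases and covers all $k\ge 1$ at once. No gap; your version just carries a little extra bookkeeping.
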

\begin{proof}
Choose $k\in\mathbb{N}$ and take $x=(aabb)^{k-1}$. For $0\leq i\leq k-1$
define $z_{i}=(aabb)^{i}ab(aabb)^{k-i-1}.$ It is clear that $z_{i}$
is obtained from $x$ by applying the rewrite rule at position $4i$.
Moreover, it is clear that $z_{i}\neq z_{j}$ for $i\neq j.$ Now,
applying the rewrite rule at position $4i+1$ we obtain a reduction
$z_{i}\to y$ where $y=(aabb)^{k}$. This yields a subgraph isomorphic
to $M_{k}$ as required.
\end{proof}
\begin{lem}
\label{lem:v_many_letters}Let $S=\langle A\mid1\to v\rangle$ be
an SRS such that $v\neq b^{n}$ for every $b\in A$. Then, $M_{k}$
is embeddable in $G_{S}$ for every $k$.
\end{lem}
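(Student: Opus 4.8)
The plan is to generalize the construction used in \lemref{One_To_ab_Case}, where the case $v=ab$ was handled by taking powers of the ``doubled'' block $aabb=a\cdot ab\cdot b$. Writing $v=v_{1}\cdots v_{n}$, I would fix a factorization $v=\alpha\beta$ into two nonempty words (so $1\le|\alpha|\le n-1$), set the block $B=\alpha v\beta$ (hence $B=\alpha\alpha\beta\beta$ and $|B|=2|v|$), and then propose the assignment $x=B^{k-1}$, $y=B^{k}$, and $z_{i}=B^{i}\,v\,B^{k-1-i}$ for $0\le i\le k-1$.

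First I would check that, for \emph{any} such factorization, these words carry the required $2k$ edges. The edge $x\to z_{i}$ comes from inserting $v$ into $B^{k-1}=B^{i}B^{k-1-i}$ at position $i|B|$. The edge $z_{i}\to y$ comes from inserting $v$ into the distinguished middle factor $v=\alpha\beta$ of $z_{i}$, at the split point between $\alpha$ and $\beta$: this turns that factor into $\alpha v\beta=B$ and hence turns $z_{i}$ into $B^{k}=y$. Since $|x|<|z_{i}|<|y|$, the words $x$, the $z_{i}$, and $y$ lie in three different length classes, so the only thing left is to guarantee that the $z_{i}$ are pairwise distinct. This is the single place where the hypothesis $v\neq b^{n}$ must enter: when $v=b^{n}$ all the $z_{i}$ collapse to one word, matching the negative result \lemref{v_one_letter}.

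The hard part is therefore the distinctness of the $z_{i}$, and the crux is to choose the factorization wisely. I would first isolate the elementary fact that inserting $v$ at two block-aligned positions $i|B|<i'|B|$ of $B^{k-1}$ can yield the same word only if $v$ is a \emph{prefix} of $B$. Indeed, the two candidate words agree on the first $i|B|$ coordinates, and comparing them on the coordinates $[\,i|B|,\,i|B|+|v|\,)$ forces $v$ to equal the corresponding factor of $B^{k-1}$; since $|v|<|B|$ and this window begins at a block boundary, that factor is exactly $B[0,|v|)$, so $v=B[0,|v|)$. It thus suffices to pick $\alpha,\beta$ so that $v$ is \emph{not} a prefix of $B=\alpha v\beta$. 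Using $v\neq b^{n}$, let $m\ge 2$ be the first position with $v_{m}\neq v_{1}$ and take $\alpha=v_{1}^{\,m-1}$ and $\beta=v_{m}\cdots v_{n}$ (both nonempty). Then $B=v_{1}^{\,2(m-1)}v_{m}\cdots$, whereas $v=v_{1}^{\,m-1}v_{m}\cdots$, so $v$ and $B$ first disagree at coordinate $m-1$, where $v$ has $v_{m}$ and $B$ has $v_{1}$. Hence $v$ is not a prefix of $B$, the $z_{i}$ are distinct, and the assignment embeds $M_{k}$.

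Finally I would record that this works for every $k$ (the degenerate cases $k\le 2$ being already covered), which together with \lemref{v_one_letter} yields the stated dichotomy. The only genuinely delicate point in the whole argument is the prefix criterion for collisions and the accompanying choice of $\alpha=v_{1}^{\,m-1}$; everything else is a direct verification of edges and lengths.
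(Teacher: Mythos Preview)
Your proof is correct. The edge checks are straightforward, and the key step---showing that $z_i=z_{i'}$ forces $v$ to be the length-$|v|$ prefix of $B$, then choosing $\alpha=v_1^{\,m-1}$ so that $B[m-1]=v_1\neq v_m=v[m-1]$---is clean and uses the hypothesis exactly where it is needed.

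The paper takes a different route: it writes $v=av'$ (with $a$ the first letter, so $v'$ contains some letter $\neq a$), defines the monoid homomorphism $f:\{a,b\}^{\ast}\to A^{\ast}$ by $f(a)=a$, $f(b)=v'$, observes that $f$ is injective and $f(ab)=v$, and concludes that $f$ induces a graph embedding $G_{\langle a,b\mid 1\to ab\rangle}\hookrightarrow G_S$, so the $M_k$ of \lemref{One_To_ab_Case} transports over. If one unrolls this, applying $f$ to the $(aabb)$-blocks of \lemref{One_To_ab_Case} gives precisely your construction with the particular choice $\alpha=v_1$, $\beta=v_2\cdots v_n$; the ``$v$ not a prefix of $B$'' condition is then hidden inside the injectivity of $f$ (equivalently, the fact that $\{a,v'\}$ is a code). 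Your approach is more self-contained and makes the distinctness of the $z_i$ completely explicit, at the cost of having to pick $\alpha$ with some care; the paper's approach is more conceptual (everything is pulled back to the simplest case) but leaves the injectivity of $f$ as an exercise.
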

\begin{proof}
Assume that the first letter of $v$ is $a$ so $v=av^{\prime}$ where
$v^{\prime}$ contains at least one letter distinct from $a$. Define
a monoid homomorphism $f:\{a,b\}^{\ast}\to A^{\ast}$ which is the
extension of 
\[
f(a)=a,\quad f(b)=v^{\prime}.
\]

It is easy to see that $f$ is injective and that $f(ab)=av^{\prime}=v$.
Therefore, it induces a graph embedding 
\[
\hat{f}:G_{T}\to G_{S}
\]
where $T=\langle a,b\mid1\to ab\rangle$. In particular, it embeds
the subgraph of $G_{T}$ isomorphic to $M_{k}$ (which exists by \lemref{One_To_ab_Case})
onto an isomorphic subgraph of $G_{S}$.

Combining \lemref{v_one_letter} and \lemref{v_many_letters} we conclude
this section.
\end{proof}
\begin{prop}
\label{prop:SpecialSRS}Let $S=\langle A\mid1\to v\rangle$ be an
SRS. If $v=b^{n}$ for some $b\in A$ then $k=2$ is the maximal value
such that $M_{k}$ is embeddable in $G_{S}$. Otherwise, $M_{k}$
is embeddable in $G_{S}$ for every natural $k$.
\end{prop}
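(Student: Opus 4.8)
The plan is simply to assemble the lemmas of this subsection with the generic $M_{2}$-construction given at the start of \secref{Embeddability_section}, since the proposition is precisely the disjunction of the two cases those results already cover.

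First I would dispose of the case where $v\neq b^{n}$ for every $b\in A$. Here \lemref{v_many_letters} asserts verbatim that $M_{k}$ is embeddable in $G_{S}$ for every natural $k$, so this half of the statement needs no additional argument.

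Next I would handle the case $v=b^{n}$ for some $b\in A$, where the claim is that $k=2$ is maximal. Two things must be checked: that $M_{2}$ embeds and that no $M_{k}$ with $k\geq 3$ embeds. For the first, recall that under our standing assumptions $|A|>1$ and $u\neq v$; since $u=1$ this forces $n\geq 1$, so $v\neq 1$, and the explicit construction at the beginning of the section produces a copy of $M_{2}$ inside $G_{S}$. For the second, \lemref{v_one_letter} already gives that $M_{3}$ is not embeddable. To pass from $M_{3}$ to all larger $k$ I would use the evident monotonicity: an embedding of $M_{k}$ for $k\geq 3$ restricts, by retaining only three of the vertices $z_{i}$ together with their incident edges, to an embedding of $M_{3}$; hence the non-embeddability of $M_{3}$ propagates to every $M_{k}$ with $k\geq 3$. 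Combining this with the embeddability of $M_{2}$ shows that $k=2$ is indeed the maximum.

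Because each case collapses to a lemma already established, I do not anticipate any genuine obstacle. The only step deserving explicit mention is the monotonicity argument in the second case, which guarantees that the failure of $M_{3}$ rules out every intermediate value of $k$ rather than only $k=3$.
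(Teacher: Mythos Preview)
Your proposal is correct and matches the paper's approach exactly: the proposition is stated there as an immediate consequence of \lemref{v_one_letter} and \lemref{v_many_letters}, with the $M_{2}$-embedding coming from the generic construction at the start of \secref{Embeddability_section}. The monotonicity remark you add (that non-embeddability of $M_{3}$ forces non-embeddability of $M_{k}$ for all $k\geq 3$) is left implicit in the paper but is indeed the only point worth spelling out.
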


\subsection{\label{subsec:Bordered_SRSs}SRSs where $v$ is bordered with $u$}

In this section we will show that any system $S=\langle A\mid u\to v\rangle$
where $v$ is bordered with $u$ can be reduced using Adyan reduction
\cite{adjan1978} into an SRS of the form $\tilde{S}=\langle\tilde{A}\mid1\to\tilde{v}\rangle$
such that $M_{k}$ is embeddable in $G_{S}$ if and only if it is
a embeddable in $G_{\tilde{S}}$. Therefore, we can use \propref{SpecialSRS}
in order to determine whether $M_{k}$ is a subgraph of $S$. We remark
that a similar approach of using Adyan reductions for other one-rule
problems was used in \cite{Shikishima-Tsuji1997} and \cite[Section 6]{Stein2015}. 

We start with some basic definitions required for the reduction.
\begin{defn}
Let $u\in A^{\ast}$ be some word. Its set of \emph{self-overlaps}
is defined by 
\[
\OVL(u)=\{w\in A^{+}\mid\exists x,y\in A^{+}\quad u=xw=wy\}.
\]
The word $u$ is called \emph{self-overlap-free if $\OVL(u)=\varnothing$.}

Let $T$ be a self-overlap-free word over some alphabet $A$. Enumerate
all words in $A^{\ast}$ without $T$ as a factor by 
\[
R_{1},R_{2},\ldots
\]
and let $B$ be an infinite set of new letters
\[
B=\{b_{1},b_{2},\ldots\}\quad(B\cap A=\varnothing).
\]

Denote the set of words bordered with $T$ by $\Bord_{T}$ and note
that every word $x\in\Bord_{T}$ can be decomposed uniquely into 
\[
x=TR_{i_{1}}TR_{i_{2}}\cdots TR_{i_{k}}T.
\]
\end{defn}
Adyan and Oganesyan define a bijection $\varphi_{T}:\Bord_{T}\to B^{\ast}$
inductively by
\[
\varphi_{T}(x)=\begin{cases}
1 & x=T\\
\varphi_{T}(x_{1})b_{i} & x=x_{1}R_{i}T,\quad x_{1}\in\Bord_{T}.
\end{cases}
\]

It is important to observe some properties of $\varphi_{T}$.
\begin{lem}
For every $x\in\Bord_{T}$ we have that $|\varphi_{T}(x)|<|x|$.
\end{lem}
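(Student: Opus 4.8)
The plan is to induct on the length $k$ of the unique decomposition $x = TR_{i_1}TR_{i_2}\cdots TR_{i_k}T$ guaranteed by the setup, tracking both sides of the claimed inequality through the recursive definition of $\varphi_T$. The single quantitative ingredient I need is that $T \neq 1$, i.e. $|T| \geq 1$: if $T$ were empty then every word (including $1$ itself) would contain $T$ as a factor, leaving no words $R_1, R_2, \ldots$ to enumerate, so the whole construction implicitly assumes $T$ is nonempty. Everything else is bookkeeping on word lengths.

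For the base case $k=0$ we have $x = T$ and $\varphi_T(x) = 1$, so $|\varphi_T(x)| = 0 < 1 \leq |T| = |x|$. For the inductive step I would write $x = x_1 R_i T$ with $x_1 \in \Bord_T$ having a strictly shorter decomposition, so that by definition $\varphi_T(x) = \varphi_T(x_1)b_i$ and hence $|\varphi_T(x)| = |\varphi_T(x_1)| + 1$. The induction hypothesis gives $|\varphi_T(x_1)| < |x_1|$; since these are integers, $|\varphi_T(x_1)| \leq |x_1| - 1$. Combining with $|x| = |x_1| + |R_i| + |T| \geq |x_1| + 1$ (where the last step uses $|T| \geq 1$) yields $|\varphi_T(x)| \leq |x_1| < |x|$, as required.

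If one prefers a direct count to the induction, the same estimate appears at once: reading $\varphi_T$ off the decomposition gives $\varphi_T(x) = b_{i_1}\cdots b_{i_k}$, so $|\varphi_T(x)| = k$, whereas $|x| = (k+1)|T| + \sum_{j=1}^k |R_{i_j}| \geq (k+1)|T| \geq k+1 > k$. Either route reduces the lemma to the trivial arithmetic fact that $k < k+1$.

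I do not anticipate any genuine obstacle here; the one point that requires care is that the inequality is \emph{strict}, which is exactly what forces the use of $|T| \geq 1$ (equivalently $T \neq 1$) rather than the vacuous bound $|T| \geq 0$. In particular, each factor $R_{i_j}$ is permitted to be empty, so the slack in the estimate comes entirely from the $k+1$ occurrences of the nonempty word $T$, and it is essential to the subsequent Adyan-reduction argument that this slack be present so that $\varphi_T$ strictly shortens words.
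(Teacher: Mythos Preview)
Your proof is correct and follows essentially the same inductive route as the paper, which simply records the base case $0=|\varphi_T(T)|<|T|$ and the inductive inequality $1=|b_i|\leq |R_iT|$; your version spells out the arithmetic and makes the hypothesis $|T|\geq 1$ explicit, and your direct count is a harmless alternative phrasing of the same estimate.
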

\begin{proof}
This can easily be proved by induction since $0=|\varphi_{T}(T)|<|T|$
and $1=|b_{i}|\leq|R_{i}T|$ even if $R_{i}$ is the empty word. 
\end{proof}
\begin{lem}
Let $u,v\in\Bord_{T}$ such that $u$ is a prefix of $v$, then $\varphi_{T}(u)$
is a prefix of $\varphi_{T}(v)$.
\end{lem}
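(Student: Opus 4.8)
The plan is to prove the statement by comparing the positions at which $T$ occurs inside $u$ and inside $v$, and the key input is the hypothesis that $T$ is self-overlap-free. First I would record the crucial structural fact: if $T$ is self-overlap-free, then any two occurrences of $T$ as a factor of an arbitrary word $w$ must be non-overlapping, i.e. their starting positions differ by at least $|T|$. Indeed, if $T$ occurred at positions $p<p'$ with $p'-p<|T|$, the overlapping region would be simultaneously a proper nonempty suffix and a proper nonempty prefix of $T$, hence an element of $\OVL(T)$, contradicting $\OVL(T)=\varnothing$.

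From this I deduce that the occurrences of $T$ in a bordered word describe exactly its canonical decomposition. Writing $v=TR_{j_{1}}TR_{j_{2}}\cdots TR_{j_{l}}T$, the $l+1$ ``block'' copies of $T$ occur at known positions $0=q_{0}<q_{1}<\cdots<q_{l}=|v|-|T|$. I claim these are all the occurrences of $T$ in $v$: any further occurrence would either overlap a block copy (impossible by the non-overlap fact) or lie entirely inside some $R_{j_{m}}$ (impossible since each $R_{j_{m}}$ is $T$-free by construction). The same holds for $u$, whose occurrences of $T$ sit at positions $0=p_{0}<p_{1}<\cdots<p_{k}=|u|-|T|$.

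Next I would exploit that $u$ is a prefix of $v$. Every occurrence of $T$ in $u$ is an occurrence of $T$ in $v$ at the same position, and conversely every occurrence of $T$ in $v$ lying within the first $|u|$ letters is an occurrence in $u$; since $|u|-|T|=p_{k}$ is itself such an occurrence, the two ordered lists agree on their common initial segment, giving $k\leq l$ and $p_{r}=q_{r}$ for $0\leq r\leq k$. Because the blocks $R_{i_{r}}$ and $R_{j_{r}}$ occupy the identical range $[p_{r-1}+|T|,\,p_{r})$ inside $u$ and inside $v$, and $u$ is a prefix of $v$, these subwords coincide, so $R_{i_{r}}=R_{j_{r}}$ and hence $i_{r}=j_{r}$ for $1\leq r\leq k$ (equal $T$-free words receive equal indices in the fixed enumeration $R_{1},R_{2},\ldots$). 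Unwinding the recursive definition of $\varphi_{T}$ gives $\varphi_{T}(u)=b_{i_{1}}\cdots b_{i_{k}}$ and $\varphi_{T}(v)=b_{j_{1}}\cdots b_{j_{l}}$, so $\varphi_{T}(u)=b_{j_{1}}\cdots b_{j_{k}}$ is a prefix of $\varphi_{T}(v)$, as required.

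I expect the main obstacle to be the second step --- verifying that no ``stray'' occurrence of $T$ arises across the block boundaries of $v$. This is precisely the point at which self-overlap-freeness is indispensable, and all the subsequent bookkeeping with positions rests on it. An alternative route by induction on $|u|$, peeling off the last block $u=u_{1}R_{i}T$ and applying the inductive hypothesis to the prefix $u_{1}\in\Bord_{T}$, is possible, but it ultimately requires the same non-overlap observation to identify the newly appended letter $b_{i}$; I would therefore present the positional argument as the cleaner path.
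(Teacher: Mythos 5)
Your proof is correct, and it takes a more explicit route than the paper's. The paper argues via a multiplicativity identity, $\varphi_{T}(Tx_{1}Tx_{2}T)=\varphi_{T}(Tx_{1}T)\varphi_{T}(Tx_{2}T)$, writing $u=T\overline{u}T$ and $v=T\overline{u}TwT$ and concluding immediately; in doing so it silently relies on the fact that the canonical decomposition of $v$ begins with the canonical decomposition of $u$ (equivalently, that the trailing $T$ of $u$ is one of the ``block'' occurrences of $T$ in $v$), which is exactly the point your argument establishes. Your route instead analyses all occurrences of $T$: self-overlap-freeness forces distinct occurrences to be disjoint, so the only occurrences in a word of $\Bord_{T}$ are the block copies, and since $u$ is a prefix of $v$ the two lists of occurrences agree on their common initial segment; matching positions then forces $R_{i_{r}}=R_{j_{r}}$, and unwinding the recursion gives the prefix relation on the images. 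What your version buys is transparency about where $\OVL(T)=\varnothing$ is actually used --- the paper absorbs this into the asserted uniqueness of the decomposition $x=TR_{i_{1}}T\cdots TR_{i_{k}}T$, whereas you derive it; what the paper's version buys is brevity and a reusable homomorphism-like identity for $\varphi_{T}$. One small point worth making explicit in your write-up: when you assert that every occurrence of $T$ in $v$ lying within the first $|u|$ letters is an occurrence in $u$, you should note that this already shows $q_{r}=p_{r}$ for $r\leq k$ because both lists are increasing and contain exactly the occurrences at positions at most $|u|-|T|$; you do say this, so the argument is complete.
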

\begin{proof}
It is clear from the definition of $\varphi_{T}$ that 
\[
\varphi_{T}(Tx_{1}Tx_{2}T)=\varphi_{T}(Tx_{1}T)\varphi_{T}(Tx_{2}T).
\]
Therefore, if $u=T\overline{u}T$ and $v=T\overline{u}TwT$ then 
\begin{align*}
\varphi_{T}(v) & =\varphi_{T}(T\overline{u}TwT)=\varphi_{T}(T\overline{u}T)\varphi_{T}(TwT)\\
 & =\varphi_{T}(u)\varphi_{T}(TwT)
\end{align*}
so $\varphi_{T}(u)$ is indeed a prefix of $\varphi_{T}(v)$.
\end{proof}
A dual argument shows that if $u$ is a suffix of $v$ then $\varphi_{T}(u)$
is a suffix of $\varphi_{T}(v)$. Therefore, we obtain:
\begin{lem}
\label{lem:Adyan_Reduction_Bordered}Let $u,v\in\Bord_{T}$ be distinct
words such that $v$ is bordered with $u$ then $\varphi_{T}(v)$
is bordered with $\varphi_{T}(u)$.
\end{lem}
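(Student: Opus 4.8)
The plan is to derive this statement directly from the two preceding lemmas, since ``bordered with'' is by definition the conjunction of ``is a prefix of'' and ``is a suffix of''. First I would unwind the hypothesis: saying that $v$ is bordered with $u$ means precisely that $u$ is simultaneously a prefix and a suffix of $v$. These two facts can then be fed independently into the results already established just above, which concern the prefix and suffix relations separately.

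For the prefix half, I would apply the lemma stating that if $u$ is a prefix of $v$ (with $u,v\in\Bord_{T}$) then $\varphi_{T}(u)$ is a prefix of $\varphi_{T}(v)$; this immediately gives that $\varphi_{T}(u)$ is a prefix of $\varphi_{T}(v)$. For the suffix half, I would invoke the dual statement recorded immediately after that lemma, namely that $u$ being a suffix of $v$ forces $\varphi_{T}(u)$ to be a suffix of $\varphi_{T}(v)$. Combining the two conclusions, $\varphi_{T}(u)$ is both a prefix and a suffix of $\varphi_{T}(v)$, which is exactly the assertion that $\varphi_{T}(v)$ is bordered with $\varphi_{T}(u)$.

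I do not expect any genuine obstacle here, since the substantive combinatorial work was already carried out in proving the prefix lemma (and hence its dual), and the two halves are logically independent so they may be applied in parallel without interaction. The only point worth recording is the role of the distinctness hypothesis: because $\varphi_{T}$ is a bijection, $u\neq v$ guarantees $\varphi_{T}(u)\neq\varphi_{T}(v)$, so that $\varphi_{T}(u)$ is a genuine (proper) border of $\varphi_{T}(v)$ rather than coinciding with it. This is the feature that makes the lemma useful for the Adyan reduction that follows, where one wants the induced rule $\varphi_{T}(u)\to\varphi_{T}(v)$ to be a nontrivial rewriting rule.
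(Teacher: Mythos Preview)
Your proposal is correct and matches the paper's approach exactly: the paper also derives this lemma immediately from the prefix lemma together with its dual suffix version, with no additional argument. Your extra remark about distinctness and the bijectivity of $\varphi_{T}$ is a helpful clarification, though the paper does not spell it out at this point.
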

From now on we consider an SRS $S=\langle A\mid u\to v\rangle$ such
that ($u\neq v$ and) $v$ is bordered with $u$. This implies that
$u\in\OVL(v)$. Denote by $T$ the shortest element of $\OVL(u)$
or $T=u$ if $\OVL(u)=\varnothing$. Clearly, $T$ is self-overlap-free
and $T\in\OVL(v)$ so both $u$ and $v$ are bordered with $T$. (A
system $S=\langle A\mid u\to v\rangle$ with this property is called
\emph{reducible} in \cite{adjan1978}). We make some observations
on the existence of a subgraph of $G_{S}$ isomorphic to $M_{k}$.
\begin{lem}
\label{lem:Subgraph_in_bord_T}If $M_{k}$ is embeddable in $G_{S}$
then it is also isomorphic to a subgraph of $G_{S}$ whose vertices
are in $\Bord_{T}$.
\end{lem}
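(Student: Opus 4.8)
The plan is to exploit the elementary fact that two-sided multiplication by a fixed word is a graph embedding of $G_{S}$ into itself, and to conjugate a given copy of $M_{k}$ by $T$ on both sides so that every vertex becomes bordered with $T$. Concretely, suppose $M_{k}$ is embeddable in $G_{S}$; this furnishes words $x,z_{1},\ldots,z_{k},y\in A^{\ast}$ with $z_{1},\ldots,z_{k}$ pairwise distinct together with one-step reductions $x\to z_{i}$ and $z_{i}\to y$ for every $i$. I would consider the map $w\mapsto TwT$ on $A^{\ast}$ and show that it carries this subgraph to an isomorphic copy of $M_{k}$ all of whose vertices lie in $\Bord_{T}$.

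First I would record the general principle that for any fixed words $s,t\in A^{\ast}$ the assignment $w\mapsto swt$ induces a graph embedding $G_{S}\to G_{S}$. On vertices it is injective by cancellation of $s$ on the left and $t$ on the right. On edges, if $w\to w^{\prime}$ is the one-step reduction obtained by applying $u\to v$ at position $p$, so that $w=\alpha u\beta$ and $w^{\prime}=\alpha v\beta$, then $swt=s\alpha u\beta t\to s\alpha v\beta t=sw^{\prime}t$ is again a one-step reduction applying the same rule, now at position $p+|s|$. Thus domains and ranges are respected, and distinct edges are sent to distinct edges: the two edges leaving a common source correspond to distinct positions (else their targets would coincide), and edges with distinct sources remain distinct after conjugation.

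Taking $s=t=T$, every image vertex has the form $TwT$, which has $T$ both as a prefix and as a suffix and hence lies in $\Bord_{T}$. Composing the given embedding of $M_{k}$ with the map $w\mapsto TwT$ therefore produces a subgraph of $G_{S}$ isomorphic to $M_{k}$ whose vertices $TxT,Tz_{1}T,\ldots,Tz_{k}T,TyT$ all belong to $\Bord_{T}$, as required. The only points that need genuine care — and so constitute the main, though mild, obstacle — are verifying that $w\mapsto TwT$ is simultaneously injective on vertices and on edges, so that the image is still a faithful copy of $M_{k}$, and checking that $TwT$ is indeed bordered with $T$ even in degenerate situations (for instance $w=1$), which holds precisely because $T$ is self-overlap-free and hence admits no interfering overlaps between the prefix and suffix occurrences.
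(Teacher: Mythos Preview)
Your argument is correct and takes a genuinely different route from the paper. The paper \emph{strips}: given a word $x$ containing $T$, it writes $x=x'\overline{x}x''$ with $\overline{x}\in\Bord_{T}$ and $x',x''$ free of $T$, then uses that $u$ and $v$ are both bordered with $T$ to conclude that any one-step reduction fixes $x'$ and $x''$ and acts only on $\overline{x}$; hence an $M_{k}$ in $G_{S}$ projects to an $M_{k}$ with vertices in $\Bord_{T}$. You instead \emph{pad}: you push the whole configuration forward along $w\mapsto TwT$, which is a graph self-embedding of $G_{S}$ for purely free-monoid reasons, and lands every vertex in $\Bord_{T}$ automatically.

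Your approach is arguably cleaner: it uses nothing about $u$, $v$, or $T$ beyond cancellativity of $A^{\ast}$ and the trivial fact that $TwT$ has $T$ as both prefix and suffix. (Your closing caveat about the self-overlap-free property of $T$ is unnecessary here --- $TwT\in\Bord_{T}$ holds for any nonempty $T$ and any $w$; the self-overlap-freeness is only needed later for the unique factorisation underlying $\varphi_{T}$.) The paper's stripping argument, by contrast, genuinely relies on the fact that $u$ and $v$ are bordered with $T$, but it has the conceptual advantage of isolating the ``essential'' $\Bord_{T}$-core of each word, which is exactly what $\varphi_{T}$ then acts on in the next lemma.
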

\begin{proof}
Assume 
\[
x\to z\to y
\]
is a reduction in $G_{S}$. Note that any word $x\in A^{\ast}$ which
contains $T$ as a factor can be written uniquely as $x=x^{\prime}\overline{x}x^{\prime\prime}$
where $\overline{x}\in\Bord_{T}$ and $x^{\prime},x^{\prime\prime}$
do not contain $T$ as a factor. Therefore, we can write the above
reduction as 
\[
x^{\prime}\overline{x}x^{\prime\prime}\to z^{\prime}\overline{z}z^{\prime\prime}\to y^{\prime}\overline{y}y^{\prime\prime}.
\]
Since $u$ and $v$ are bordered with $T$, it is clear that 
\[
x^{\prime}=z^{\prime}=y^{\prime},\quad x^{\prime\prime}=z^{\prime\prime}=y^{\prime\prime}
\]
and
\[
\overline{x}\to\overline{z}\to\overline{y}
\]
is also a reduction. Therefore, if we have $k$ different reductions
\[
x\to z_{1}\to y,\ldots,x\to z_{k}\to y
\]
there are $k$ corresponding reductions
\[
\overline{x}\to\overline{z_{1}}\to\overline{y},\ldots,\overline{x}\to\overline{z_{k}}\to\overline{y}
\]
such that $\overline{x},\overline{y},\overline{z_{1}},\ldots\overline{z_{k}}\in\Bord_{T}$.
Since the steps $x\to z_{i}$ and $x\to z_{j}$ are carried out at
different positions for $i\neq j$ we know that $\overline{x}\to\overline{z_{i}}$
and $\overline{x}\to\overline{z_{j}}$ are carried out in different
positions and hence $\overline{z_{i}}\neq\overline{z_{j}}$. Therefore,
we have a subgraph isomorphic to $M_{k}$ such that all the vertices
are bordered with $T$ as required.
\end{proof}
\begin{lem}
\label{lem:Adyan_Reduction}Let $S=\langle A\mid u\to v\rangle$ be
an SRS such that $v$ is bordered with $u$ and let $T$ be defined
as above. Then $M_{k}$ is embeddable in $G_{S}$ if and only if it
is embeddable in $G_{\hat{S}}$ for $\hat{S}=\langle B\mid\varphi_{T}(u)\to\varphi_{T}(v)\rangle$.
\end{lem}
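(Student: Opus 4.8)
The plan is to promote the bijection $\varphi_{T}\colon\Bord_{T}\to B^{\ast}$ to a graph isomorphism between $G_{\hat{S}}$ and the subgraph of $G_{S}$ whose vertices lie in $\Bord_{T}$, and then combine this with \lemref{Subgraph_in_bord_T} to obtain the stated equivalence. The main structural input is that $\varphi_{T}$ is multiplicative for the \emph{overlapping concatenation} on $\Bord_{T}$: for $p,q\in\Bord_{T}$, writing $p\ast q$ for the word obtained by identifying the terminal $T$ of $p$ with the initial $T$ of $q$, one has $\varphi_{T}(p\ast q)=\varphi_{T}(p)\varphi_{T}(q)$. This is precisely the identity $\varphi_{T}(Tx_{1}Tx_{2}T)=\varphi_{T}(Tx_{1}T)\varphi_{T}(Tx_{2}T)$ recorded in the proof of the prefix lemma above, so I would first observe that $(\Bord_{T},\ast)$ is a monoid with identity $T$ and that $\varphi_{T}$ is a monoid isomorphism onto the free monoid $B^{\ast}$.

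The heart of the argument is the claim that, for $x,z\in\Bord_{T}$, the pair $(x,z)$ is a one-step $S$-reduction via $u\to v$ if and only if $(\varphi_{T}(x),\varphi_{T}(z))$ is a one-step $\hat{S}$-reduction via $\varphi_{T}(u)\to\varphi_{T}(v)$. For the forward direction, write $x=\alpha u\beta$ and decompose $x=TR_{i_{1}}T\cdots TR_{i_{m}}T$. Since $T$ is self-overlap-free and each $R_{i_{j}}$ is $T$-free, all occurrences of $T$ in $x$ are exactly the $m+1$ displayed copies; because $u$ is bordered with $T$, its occurrence must begin and end at two of these marked copies, i.e. it is \emph{aligned} with the block decomposition. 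Hence $x=\alpha'\ast u\ast\beta'$ with $\alpha',\beta'\in\Bord_{T}$ obtained by appending (resp. prepending) the shared copy of $T$, and since $v$ is also bordered with $T$ the same $\alpha',\beta'$ give $z=\alpha'\ast v\ast\beta'$. Multiplicativity then yields $\varphi_{T}(x)=\varphi_{T}(\alpha')\varphi_{T}(u)\varphi_{T}(\beta')$ and $\varphi_{T}(z)=\varphi_{T}(\alpha')\varphi_{T}(v)\varphi_{T}(\beta')$, so $\varphi_{T}(z)$ arises from $\varphi_{T}(x)$ by replacing the factor $\varphi_{T}(u)$ at position $|\varphi_{T}(\alpha')|$ with $\varphi_{T}(v)$, which is a one-step $\hat{S}$-reduction.

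For the converse, given a one-step $\hat{S}$-reduction $w=\gamma\,\varphi_{T}(u)\,\delta\to\gamma\,\varphi_{T}(v)\,\delta=w'$, I would use surjectivity of $\varphi_{T}$ to choose $\alpha',\beta'\in\Bord_{T}$ with $\varphi_{T}(\alpha')=\gamma$ and $\varphi_{T}(\beta')=\delta$; then $x=\alpha'\ast u\ast\beta'$ and $z=\alpha'\ast v\ast\beta'$ satisfy $\varphi_{T}(x)=w$, $\varphi_{T}(z)=w'$ and form a one-step $S$-reduction. With the claim established, both directions of the lemma follow formally. If $M_{k}$ embeds in $G_{S}$, \lemref{Subgraph_in_bord_T} supplies reductions $x\to z_{i}\to y$ ($1\le i\le k$) with the $z_{i}$ distinct and all vertices in $\Bord_{T}$; applying $\varphi_{T}$ produces $\varphi_{T}(x)\to\varphi_{T}(z_{i})\to\varphi_{T}(y)$ in $G_{\hat{S}}$, and injectivity of $\varphi_{T}$ keeps the images distinct, giving an $M_{k}$ in $G_{\hat{S}}$. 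Conversely, an $M_{k}$ in $G_{\hat{S}}$ pulls back through $\varphi_{T}^{-1}$, using bijectivity, to an $M_{k}$ in $G_{S}$.

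I expect the only real obstacle to be the alignment step: proving rigorously that an occurrence of the $T$-bordered word $u$ inside a $T$-bordered word respects the canonical block decomposition, i.e. starts and ends at marked copies of $T$. This rests entirely on the self-overlap-freeness of $T$ — two occurrences of a self-overlap-free word cannot properly overlap, and the blocks $R_{i_{j}}$ contain no occurrence of $T$ — after which the bookkeeping with $\ast$ and $\varphi_{T}$ is routine; in particular the case $u=T$ (where $\varphi_{T}(u)=1$ and $\hat{S}$ is special) is handled uniformly. A minor point to check in passing is that distinct application positions in $x$ correspond to distinct positions in $\varphi_{T}(x)$, which is immediate since the block index is determined by the occurrence, so the correspondence is a genuine graph isomorphism and not merely a vertex bijection.
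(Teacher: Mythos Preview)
Your proposal is correct and follows the same route as the paper: use the bijection $\varphi_{T}$ (together with \lemref{Subgraph_in_bord_T}) to transport an $M_{k}$ between $G_{S}$ and $G_{\hat{S}}$. The paper's proof is terser---it simply asserts that $\varphi_{T}^{-1}$ carries subgraphs of $G_{\hat{S}}$ to isomorphic subgraphs of $G_{S}$ and that $\varphi_{T}$ does the reverse on $\Bord_{T}$---whereas you explicitly justify the underlying graph isomorphism via the alignment argument and multiplicativity of $\varphi_{T}$ with respect to $\ast$; this extra care is exactly what is needed to make the paper's ``it is clear'' rigorous.
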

\begin{proof}
Recall that $\varphi_{T}$ is a bijection $\varphi_{T}:\Bord_{T}\to B^{\ast}$.
It is clear that $\varphi_{T}^{-1}$ maps any subgraph of $G_{\hat{S}}$
onto an isomorphic subgraph of $G_{S}$. On the other direction, if
$G_{S}$ has a subgraph isomorphic to $M_{k}$, then by \lemref{Subgraph_in_bord_T}
it has such subgraph whose vertices are elements of $\Bord_{T}$.
Therefore, $\varphi_{T}$ maps it onto a subgraph of $G_{\hat{S}}$
isomorphic to $M_{k}$ as required.
\end{proof}
\begin{lem}
\label{lem:Infinite_To_Finite_Alphabet}Let $B$ be an alphabet (perhaps
infinite) and let $S=\langle B\mid u\to v\rangle$ be an SRS. Let
$B^{\prime}\subseteq B$ be the (finite) set of letters from $B$
that occur in $u$ and $v$ and define $S^{\prime}=\langle B^{\prime}\mid u\to v\rangle$.
Then, $M_{k}$ is embeddable in $G_{S}$ if and only if it is embeddable
in $G_{S^{\prime}}$.
\end{lem}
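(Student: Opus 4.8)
The plan is to handle the two implications separately. The backward implication is immediate: since $B'\subseteq B$ we have $(B')^{\ast}\subseteq B^{\ast}$, and a one-step reduction between words of $(B')^{\ast}$ (using the rule $u\to v$ at a given position) is verbatim a one-step reduction over $B$. Thus $G_{S'}$ is exactly the subgraph of $G_{S}$ induced on the vertex set $(B')^{\ast}$, and composing any embedding of $M_{k}$ into $G_{S'}$ with this inclusion embeds $M_{k}$ into $G_{S}$.

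The forward implication is the substantive one, and my main tool would be the monoid homomorphism $\delta\colon B^{\ast}\to(B')^{\ast}$ that fixes every letter of $B'$ and deletes every letter of $B\setminus B'$. The point is that the letters outside $B'$ are \emph{inert}: since $u,v\in(B')^{\ast}$, any occurrence of $u$ as a factor of a word $w$ must lie inside a single maximal $(B')^{\ast}$-factor of $w$. I would first make this precise by writing $w=s_{0}c_{1}s_{1}\cdots c_{m}s_{m}$ with $c_{j}\in B\setminus B'$ and $s_{j}\in(B')^{\ast}$, and observing that every rule application occurs entirely within one block $s_{j}$, fixing the separators $c_{j}$ and all other blocks. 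It follows that $\delta$ maps each one-step reduction $w\to w'$ of $S$ to a one-step reduction $\delta(w)\to\delta(w')$ of $S'$, the deletion merely relocating the rule inside the concatenation $s_{0}s_{1}\cdots s_{m}$.

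Given an embedded $M_{k}$ in $G_{S}$, say distinct two-step reductions $x\to z_{i}\to y$ for $1\le i\le k$, applying $\delta$ produces reductions $\delta(x)\to\delta(z_{i})\to\delta(y)$ of $S'$; what remains is to see that the $\delta(z_{i})$ stay pairwise distinct, which is the step I expect to be the crux. Here I would exploit that $x$ and $y$ are common to all paths. Let $D$ be the set of block indices at which $x$ and $y$ differ. Each path has length two and hence alters at most two blocks; since each path must transform $x$ into the same $y$, it must alter exactly the blocks in $D$, forcing $|D|\le 2$. If $|D|=1$, all branching takes place inside one block $s_{j_{0}}\in(B')^{\ast}$, the $z_{i}$ differ only within that block, and $\delta$ is injective on them, so their images form a genuine copy of $M_{k}$ in $G_{S'}$. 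If $|D|=2$, a direct case check (the first step hits one distinguished block, the second step the other) leaves at most two possible intermediates, so $k\le 2$; in particular every $k\ge 3$ falls into the localized case $|D|=1$.

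This settles all $k\ge 3$. For the remaining small values I would argue by hand: $M_{1}$ embeds in both graphs through the single edge $u\to v$, while the presence of $M_{2}$ is governed by the elementary alphabet-size criterion recalled in the introduction. The heart of the proof, and the place I anticipate the real work, is the localization statement together with the bound $|D|\le 2$, since this is precisely what turns branching distributed across several inert-letter-separated blocks into branching confined to one $(B')^{\ast}$-block, where $\delta$ acts faithfully.
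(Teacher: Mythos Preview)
Your forward-direction argument via the block decomposition is correct for $k\ge 3$ and is in fact more careful than the paper's. The paper simply applies the deletion homomorphism $\pi$ and asserts, with only the remark that the rewrite position carries a letter of $B'$, that $\pi(z_i)\ne\pi(z_j)$ whenever $z_i\ne z_j$. Your localization step---the set $D$ of block indices where $x$ and $y$ differ satisfies $|D|\le 2$, so for $k\ge 3$ necessarily $|D|=1$ and all the $z_i$ differ only inside a single $(B')^{\ast}$-block, on which $\delta$ is injective---is exactly the justification the paper omits.

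The gap is your treatment of $k=2$, and it is not repairable: the lemma as literally stated is false for $k=2$ when $|B'|=1<|B|$. Take $B=\{a,c\}$, $u=a$, $v=aa$, so $B'=\{a\}$. Then
\[
aca\to aaca\to aacaa\quad\text{and}\quad aca\to acaa\to aacaa
\]
exhibit an $M_2$ in $G_S$, while $G_{S'}$ lives over a one-letter alphabet and contains no $M_2$. Your appeal to ``the elementary alphabet-size criterion'' is precisely where this bites: that criterion gives $M_2$ in $G_S$ because $|B|>1$, but gives nothing for $G_{S'}$ when $|B'|=1$. The paper's bare assertion $\pi(z_i)\ne\pi(z_j)$ fails on the same example, since $\pi(aaca)=aaa=\pi(acaa)$, so its proof shares the defect. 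For the paper's global argument this is harmless---$M_2$ always embeds in the original reduction graph under the standing hypotheses of \secref{Embeddability_section}, and only the $k\ge 3$ content of this lemma is ever used downstream---but as an isolated statement the lemma needs the extra hypothesis $|B'|\ge 2$, or should be asserted only for $k\ge 3$, which is what your argument actually establishes.
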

\begin{proof}
It is clear that $G_{S^{\prime}}$ is a subgraph of $G_{S}$ by inclusion
so any subgraph of $G_{S^{\prime}}$ is a subgraph of $G_{S}$. On
the other direction denote by $\pi$ the standard projection $\pi:B^{\ast}\to(B^{\prime})^{\ast}$
defined by 
\[
\pi(b)=\begin{cases}
b & b\in B^{\prime}\\
1 & b\notin B.^{\prime}
\end{cases}
\]
It is clear that if 
\[
x\to y
\]
is a reduction of $G_{S}$ carried out at position $i$ then 
\[
\pi(x)\to\pi(y)
\]
is also a reduction of $G_{S^{\prime}}.$ Moreover, the letter at
position $i$ of $x$ is a letter of $B^{\prime}$ (it is the first
letter of $u$). Therefore, if 
\[
x\to z_{1}\to y,\ldots,x\to z_{k}\to y
\]
are $k$ reductions in $G_{S}$ such that $z_{i}\neq z_{j}$ for $i\neq j$
then
\[
\pi(x)\to\pi(z_{1})\to\pi(y),\ldots,\pi(x)\to\pi(z_{k})\to\pi(y)
\]
are $k$ reductions in $G_{S^{\prime}}$ such that $\pi(z_{i})\neq\pi(z_{j})$
for $i\neq j$. This finishes the proof.
\end{proof}
We can now state the main result of this section.
\begin{prop}
\label{prop:Full_Reduction_Bordered_Case}Let $S=\langle A\mid u\to v\rangle$
be an SRS such that $v$ is bordered with $u$ then we can construct
another SRS $\tilde{S}=\langle\tilde{A}\mid1\to\tilde{v}\rangle$
such that $M_{k}$ is embeddable in $G_{S}$ if and only if it is
embeddable in $G_{\tilde{S}}$.
\end{prop}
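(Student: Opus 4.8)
The plan is to argue by induction on $|u|$, using the Adyan reduction apparatus assembled above to peel one ``layer'' off the left-hand side of the rule at each step, until that side collapses to the empty word and the system is already in the desired special form. The length-decreasing property of $\varphi_T$ supplies the termination measure, while \lemref{Adyan_Reduction_Bordered} supplies the invariant that keeps the reduction applicable at every stage.

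For the base case I would take $u=1$. Then $v$ is trivially bordered with $u$, the system $S=\langle A\mid 1\to v\rangle$ is already of the shape $\langle\tilde{A}\mid 1\to\tilde{v}\rangle$, and we may simply set $\tilde{S}=S$, so there is nothing to prove.

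For the inductive step I would assume $u\neq 1$ and that the statement holds for every system with strictly shorter left-hand side. Let $T$ be the word defined above (the shortest element of $\OVL(u)$, or $u$ itself when $\OVL(u)=\varnothing$). As already recorded, $T$ is self-overlap-free and both $u$ and $v$ are bordered with $T$, so $u,v\in\Bord_{T}$, and $u\neq v$; hence the hypotheses of \lemref{Adyan_Reduction} are met, giving that $M_{k}$ is embeddable in $G_{S}$ if and only if it is embeddable in $G_{\hat{S}}$ for $\hat{S}=\langle B\mid\varphi_{T}(u)\to\varphi_{T}(v)\rangle$. I would then invoke \lemref{Infinite_To_Finite_Alphabet} to pass from the infinite alphabet $B$ to the finite set $B'$ of letters actually occurring in $\varphi_{T}(u)$ and $\varphi_{T}(v)$, obtaining $\hat{S}'=\langle B'\mid\varphi_{T}(u)\to\varphi_{T}(v)\rangle$ with the same $M_{k}$-embeddability as $S$. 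To close the induction it remains to verify that $\hat{S}'$ again satisfies the inductive hypothesis with smaller measure: injectivity of $\varphi_{T}$ together with $u\neq v$ gives $\varphi_{T}(u)\neq\varphi_{T}(v)$; \lemref{Adyan_Reduction_Bordered} gives that $\varphi_{T}(v)$ is bordered with $\varphi_{T}(u)$; and the length bound $|\varphi_{T}(x)|<|x|$ for $x\in\Bord_{T}$, applied to $x=u$, gives $|\varphi_{T}(u)|<|u|$. Chaining the three equivalences with the one furnished by the induction hypothesis completes the step.

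The point I expect to require the most care — and which I regard as the crux — is confirming that this iteration really does bottom out at an empty left-hand side, rather than stalling at some nonempty but ``irreducible'' $u$. This is precisely the force of \lemref{Adyan_Reduction_Bordered}: the bordered relationship between the two sides of the rule is preserved under the reduction, so the hypotheses needed to reapply \lemref{Adyan_Reduction} persist at every stage, while $|u|$ strictly decreases and is bounded below by $0$. (In particular, when $\OVL(u)=\varnothing$ one has $T=u$ and $\varphi_{T}(u)=1$, so the process reaches the base case in a single step.) Once this invariant and the well-foundedness of the measure $|u|$ are secured, the induction is routine, and the resulting special-form system $\tilde{S}=\langle\tilde{A}\mid 1\to\tilde{v}\rangle$ can be analyzed through \propref{SpecialSRS}.
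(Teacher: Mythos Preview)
Your proposal is correct and follows essentially the same approach as the paper: the paper presents the argument as an iterated construction $S\mapsto S_1\mapsto S_2\mapsto\cdots$ that terminates because $|\varphi_T(x)|<|x|$, invoking \lemref{Adyan_Reduction}, \lemref{Infinite_To_Finite_Alphabet}, and \lemref{Adyan_Reduction_Bordered} at each step, which is precisely your induction on $|u|$ unwound. Your version is in fact slightly more explicit in verifying that $\varphi_T(u)\neq\varphi_T(v)$ (by injectivity of $\varphi_T$), a point the paper leaves implicit.
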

\begin{proof}
Choose $T$ to be the shortest element of $\OVL(u)$ (or $T=u$ if
$\text{\mbox{\ensuremath{\OVL(u)=\varnothing}}}$). Take $B^{\prime}$
to be the set of letters from $B$ that occur in $\varphi_{T}(u)$
and $\varphi_{T}(v)$. Denote 
\[
A_{1}=B^{\prime},\quad u_{1}=\varphi_{T}(u),\quad v_{1}=\varphi_{T}(v)
\]
and $S_{1}=\langle A_{1}\mid u_{1}\to v_{1}\rangle$. By \lemref{Adyan_Reduction}
and \lemref{Infinite_To_Finite_Alphabet}, $M_{k}$ is embeddable
in $G_{S}$ if and only if it is embeddable in $G_{S_{1}}$. There
is no reason to expect that $\text{\mbox{\ensuremath{u_{1}=1}}}$.
However, by \lemref{Adyan_Reduction_Bordered} $v_{1}$ is still bordered
with $u_{1}$ so we choose $T_{1}$ to be the shortest element of
$\OVL(u_{1})$ or $T_{1}=u_{1}$ if $\text{\mbox{\ensuremath{\OVL(u_{1})=\varnothing}}}$.
Now we can continue this process and construct $S_{2}=\langle A_{2}\mid u_{2}\to v_{2}\rangle$
with $u_{2}=\varphi_{T_{1}}(u_{1})$, $v_{2}=\varphi_{T_{1}}(v_{1})$
and so on. Since $|\varphi_{T}(x)|<|x|$ this process must terminate.
It will terminate when $u_{k}=\varphi_{T_{k-1}}(u_{k-1})=1$. Then
we can define $\tilde{A}=A_{k}$ and $\tilde{v}=v_{k}$ and obtain
a system $\tilde{S}=\langle\tilde{A}\mid1\to\tilde{v}\rangle$ which
satisfy the desired result.
\end{proof}
\begin{rem}
Note that $T$ can be easily obtained from $u$ and $v$ and it is
also a routine procedure to calculate $\langle B^{\prime}\mid\varphi_{T}(u)\to\varphi_{T}(v)\rangle$
(note that $B^{\prime}$ is a finite set). Therefore, the process
described in \propref{Full_Reduction_Bordered_Case} can be effectively
computed.
\end{rem}
\propref{Full_Reduction_Bordered_Case} is enough in order to solve
the case of this section. Given an SRS $S=\langle A\mid u\to v\rangle$
such that $v$ is bordered with $u$ we can carry out the procedure
described in \propref{Full_Reduction_Bordered_Case} and obtain an
SRS $\tilde{S}=\langle\tilde{A}\mid1\to\tilde{v}\rangle$ which is
the case dealt with in \propref{SpecialSRS}.

\section{Conclusion}

Combining the results of \secref{Embeddability_section} we obtain
the following theorem which gives a complete answer to the question
of whether $M_{k}$ is embeddable in the reduction graph of a one-rule
SRS.
\begin{thm}
\label{thm:MainThm}Let $S=\langle A\mid u\to v\rangle$ be a one-rule
SRS such that $u\neq v$, $|u|\leq|v|$ and $|A|>1$. Then:
\begin{enumerate}
\item If $v$ is not bordered with $u$ then $k=2$ is the maximal value
such that $M_{k}$ is embeddable in $G_{S}$.
\item If $v$ is bordered with $u$ then we can use Adyan reductions as
described in \propref{Full_Reduction_Bordered_Case} and obtain an
SRS $\tilde{S}=\langle\tilde{A}\mid1\to\tilde{v}\rangle$. In this
case:
\begin{enumerate}
\item If $\tilde{v}=b^{n}$ for some $b\in\tilde{A}$ then $k=2$ is the
maximal value such that $M_{k}$ is embeddable in $G_{S}$.
\item If $\tilde{v}\neq b^{n}$ for every $b\in\tilde{A}$ then $M_{k}$
is embeddable in $G_{S}$ for every $k$.
\end{enumerate}
\end{enumerate}
\end{thm}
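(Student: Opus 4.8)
The plan is to assemble, rather than re-prove, the case analysis of \secref{Embeddability_section}; every combinatorial fact needed has already been isolated in the preceding subsections, so the work here is to glue them together and upgrade the per-graph statements into statements about the \emph{maximal} $k$. Two elementary remarks make this upgrade routine. First, since $M_3$ is isomorphic to a subgraph of $M_k$ for every $k \geq 3$ (forget all but three of the $z_i$), the non-embeddability of $M_3$ propagates to $M_k \not\hookrightarrow G_S$ for all $k \geq 3$. Second, the standing hypotheses $u \neq v$ and $|A| > 1$ already force $M_2 \hookrightarrow G_S$, via the four-vertex configuration on $uwu,uwv,vwu,vwv$ displayed at the start of \secref{Embeddability_section}. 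Thus, whenever the relevant subsection yields $M_3 \not\hookrightarrow G_S$, the maximal admissible $k$ is \emph{exactly} $2$, and whenever it yields $M_k \hookrightarrow G_S$ for all $k$, there is no maximum. These are precisely the two horns of the claimed dichotomy.

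For the first alternative, when $v$ is not bordered with $u$, I would simply invoke the proposition of \subsecref{Unbordered_SRSs}, giving $M_3 \not\hookrightarrow G_S$; combined with the two remarks above, this is exactly the statement that $k = 2$ is maximal.

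For the second alternative, when $v$ is bordered with $u$, I would run the Adyan reduction of \propref{Full_Reduction_Bordered_Case} to obtain $\tilde{S} = \langle \tilde{A} \mid 1 \to \tilde{v}\rangle$, whose defining property is that for \emph{every} $k$ one has $M_k \hookrightarrow G_S$ if and only if $M_k \hookrightarrow G_{\tilde{S}}$. Because this equivalence holds uniformly in $k$, it transports not merely the yes/no embeddability of a single $M_k$ but the whole profile, in particular whether a finite maximum exists and what its value is. It then remains to read off the behaviour of $\tilde{S}$ from \propref{SpecialSRS}: if $\tilde{v} = b^n$ for some $b \in \tilde{A}$, then $M_3 \not\hookrightarrow G_{\tilde{S}}$, hence $M_3 \not\hookrightarrow G_S$ and $k = 2$ is maximal; while if $\tilde{v} \neq b^n$ for every $b$, then $M_k \hookrightarrow G_{\tilde{S}}$ for all $k$, hence $M_k \hookrightarrow G_S$ for all $k$ and no maximum exists.

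Because all the genuine combinatorics was discharged earlier, I do not expect a real obstacle here; the argument is essentially bookkeeping. The one point I would treat with care is the status of the $M_2$ lower bound. The embedding $M_2 \hookrightarrow G_S$ should be read off directly from the original system $S$, where $|A| > 1$ is in force; the clause ``$k = 2$ is maximal'' in \propref{SpecialSRS} likewise presupposes the ambient condition $|\tilde{A}| > 1$ that guarantees $M_2 \hookrightarrow G_{\tilde{S}}$. The equivalence of \propref{Full_Reduction_Bordered_Case} at $k = 2$ reconciles the two and, in effect, rules out a degenerate one-letter $\tilde{A}$. Making explicit that this equivalence is uniform in $k$ is what renders the transfer of the full dichotomy airtight.
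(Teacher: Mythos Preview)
Your proposal is correct and follows exactly the paper's approach: the theorem is stated without a dedicated proof and is introduced simply as ``combining the results of \secref{Embeddability_section}'', which is precisely the assembly you carry out. Your additional bookkeeping---the propagation $M_3\not\hookrightarrow G_S\Rightarrow M_k\not\hookrightarrow G_S$ for $k\ge 3$, the uniformity in $k$ of \propref{Full_Reduction_Bordered_Case}, and reading the $M_2$ lower bound directly from $S$ rather than from $\tilde{S}$---only makes explicit what the paper leaves implicit.
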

\begin{rem}
Since the procedure described in \propref{Full_Reduction_Bordered_Case}
is effective, \thmref{MainThm} implies that the question of whether
$M_{k}$ is embeddable in $G_{S}$ for a given SRS $S=\langle A\mid u\to v\rangle$
is decidable.
\end{rem}
If $v$ is bordered with $u$ we can consider the SRS $\langle A\mid u\to v\rangle$
as equivalent in some sense to an SRS $\langle\tilde{A}\mid1\to\tilde{v}\rangle$
so it can be considered as a very specific case. Therefore, one way
to interpret \thmref{MainThm} is that $M_{3}$ is not embeddable
in the reduction graph of a ``standard'' one rule SRS. This gives
some restriction on the possible structure of the reduction graph
of a ``typical'' case. It is an interesting question whether other
similar restrictions can be found.

\bibliographystyle{plain}
\bibliography{library}

\begin{thebibliography}{1}

\bibitem{Adjan1966}
S.~I. Adjan.
\newblock {\em Defining relations and algorithmic problems for groups and
  semigroups}.
\newblock Proceedings of the Steklov Institute of Mathematics, No. 85 (1966).
  Translated from the Russian by M. Greendlinger. American Mathematical
  Society, Providence, R.I., 1966.

\bibitem{adjan1978}
SI~Adjan and GU~Oganesjan.
\newblock On the word and divisibility problems in semigroups with a single
  defining relation.
\newblock {\em Mathematics of the USSR-Izvestiya}, 12(2):207, 1978.

\bibitem{Book1993}
Ronald~V. Book and Friedrich Otto.
\newblock {\em String-rewriting systems}.
\newblock Texts and Monographs in Computer Science. Springer-Verlag, New York,
  1993.

\bibitem{Dershowitz2005}
Nachum Dershowitz.
\newblock Open. {C}losed. {O}pen.
\newblock In {\em Term rewriting and applications}, volume 3467 of {\em Lecture
  Notes in Comput. Sci.}, pages 376--393. Springer, Berlin, 2005.

\bibitem{Geser2003}
Alfons Geser.
\newblock Termination of string rewriting rules that have one pair of overlaps.
\newblock In {\em Rewriting techniques and applications}, volume 2706 of {\em
  Lecture Notes in Comput. Sci.}, pages 410--423. Springer, Berlin, 2003.

\bibitem{Lallement1995}
Gerard Lallement.
\newblock The word problem for {T}hue rewriting systems.
\newblock In {\em Term rewriting ({F}ont {R}omeux, 1993)}, volume 909 of {\em
  Lecture Notes in Comput. Sci.}, pages 27--38. Springer, Berlin, 1995.

\bibitem{Shikishima-Tsuji1997}
Kayoko Shikishima-Tsuji, Masashi Katsura, and Yuji Kobayashi.
\newblock On termination of confluent one-rule string-rewriting systems.
\newblock {\em Inform. Process. Lett.}, 61(2):91--96, 1997.

\bibitem{Stein2015}
Itamar Stein.
\newblock Reducing the gradedness problem of string rewriting systems to a
  termination problem.
\newblock {\em RAIRO Theor. Inform. Appl.}, 49(3):233--254, 2015.

\end{thebibliography}

\end{document}